\def\1{{\bf 1}}
\begin{document}

\newtheorem{thm}{Theorem}[section]
\newtheorem{lem}[thm]{Lemma}
\newtheorem{prop}[thm]{Proposition}
\newtheorem{cor}[thm]{Corollary}
\newtheorem{defi}[thm]{Definition}
\newtheorem{remark}[thm]{Remark}
\newtheorem{result}[thm]{Result}
\newtheorem{exm}[thm]{Example}
\title
{\bf Type IV-II codes over $\mathbb{Z}_4$ constructed from generalized bent functions}
\author
{Sara Ban (sban@math.uniri.hr)\\
[3pt]
Sanja Rukavina (sanjar@math.uniri.hr)\\
{\it\small Department of Mathematics}\\
{\it\small University of Rijeka, 51000 Rijeka, Croatia }\\[-15pt]
}
\date{}
\maketitle

\begin{abstract}
\noindent
A Type IV-II  $\mathbb{Z}_4$-code is a self-dual code over  $\mathbb{Z}_4$ with the property that all Euclidean weights are divisible by eight and
all codewords have even Hamming weight. In  this paper we use generalized bent functions for a construction of self-orthogonal codes over $\mathbb{Z}_4$ of length $2^m$, for $m$ odd, $m\geq3$, and prove that for  $m\geq5$ those codes can be extended to Type IV-II $\mathbb{Z}_4$-codes. From that family of Type IV-II $\mathbb{Z}_4$-codes, we obtain a family of self-dual Type II binary codes by using Gray map.
We also consider the weight distributions of the obtained codes and the structure of the supports of the minimum weight codewords.

\end{abstract}
{\bf Keywords:} generalized bent function, $\mathbb{Z}_4$-code, self-dual code, Type IV-II $\mathbb{Z}_4$-code
\\
{\bf Mathematics Subject Classification:} 94B05, 06E30,  05B10

\section{Introduction}

According to \cite{4dec}, the first paper in english on bent functions has been written in 1966 by Rothaus, but its final version was published ten years later in \cite{Roth}. 
Since then, bent functions have been a subject of interest of many researchers (see \cite{4dec}).
Among other things, relationships between bent functions and codes have been intensively studied. 
For example, cyclic codes and their connection with hyper-bent and bent functions are explored in \cite{hyper},
a construction of linear codes with two or three weights from weakly regular bent functions is given in \cite{weakly}, 
and in \cite{vectorial} bent vectorial functions are used for a construction of a two-parameter family of binary linear codes that do not satisfy
the conditions of the Assmus-Mattson theorem, but nevertheless hold 2-designs, and a new coding-theoretic characterization of bent vectorial functions is presented.\
Trace codes over $\mathbb{Z}_4$ based on Boolean functions and their supports are explored in \cite{hal}, and three-weight codes are obtained from bent and semi-bent functions.\\
Generalized bent functions were  introduced in \cite{Kumar}.
In \cite{Sch}, Schmidt considered generalized bent functions for a construction of constant-amplitude codes over $\mathbb{Z}_4$ of length $2^m$.
In  this paper we use generalized bent functions for a construction of a family of Type IV-II codes over $\mathbb{Z}_4$ of length $2^m$, for $m$ odd, $m\geq5$.
As a consequence, by Gray map, we obtain a family of self-dual Type II binary codes of length  $2^{m+1}$, for $m$ odd, $m\geq5$. 
Further, we consider the weight distributions of the obtained codes and the supports of the minimum weight codewords.\\
This paper is organized as follows. Section \ref{pre} gives definitions and basic properties of codes over $\mathbb{Z}_4$ and generalized bent functions.
In Section \ref{constr} the construction of Type II codes over $\mathbb{Z}_4$  of length $2^m$, for $m$ odd, $m\geq3$, from generalized bent functions is introduced.
We prove that for $m\geq5$ the constructed $\mathbb{Z}_4$-codes are also of Type IV. We give the Euclidean weight distribution, the Lee weight distribution and the symmetrized weight enumerator for the constructed codes.
By Gray map, we obtain a family of self-dual Type II binary codes of length  $2^{m+1}$, for $m$ odd, $m\geq5$. 
Finally, in Section \ref{struc}, we consider examples for $m=3$ and $m=5$, and observe the minimum weight codewords and the structure of their supports.
For the construction of examples we used Magma \cite{magma}.

\section  {Preliminaries} \label{pre}

We assume that the reader is familiar with the basic facts of coding theory. We refer the reader to \cite{FEC} in relation to terms not defined in this paper.

Let $\mathbb{F}_q$ be the field of order $q$, where $q$ is a prime power.
A code $C$ over $\mathbb{F}_q$ of length $n$ is any subset of $\mathbb{F}_q^n$. A $k$-dimensional subspace of $\mathbb{F}_q^n$ is called an $[n,k]$ {\it $q$-ary linear code}. 
An element of a code is called a {\it codeword}. A {\it generator matrix} for an $[n, k]$ code $C$ is any
$k \times n$ matrix whose rows form a basis for $C.$
A linear code $C$ of length $n$ is {\it cyclic} if for every codeword $(c_0,\dots,c_{n-2},c_{n-1})$ in $C$ the codeword $(c_{n-1},c_0,\dots, c_{n-2})$ is also in $C.$ 

If $q=2$, then the code is called {\it binary}.
The {\it (Hamming) weight} of a codeword $x\in \mathbb{F}_2^n$ is the number of non-zero coordinates in $x.$ If the minimum weight $d$ of an $[n,k]$ binary linear code is known, then we refer to the code as an $[n,k,d]$ binary linear code.
Binary linear codes for which all codewords have even weight  are called {\it (singly) even} and those among them for which all codewords have weight divisible by four are called {\it doubly even}.\\
Let $C$ be a binary linear code of length $n$. The {\it dual code} $C^\bot$ of $C$ is defined as
$$C^\bot=\{x\in \mathbb{F}_2^n\ |\ \left\langle x, y\right\rangle=0\ \text{for all}\ y\in C\},$$
where $\left\langle x,\\ y\right\rangle=x_1y_1+x_2y_2+\dots+x_ny_n\ (\text{mod}\ 2)$ for 
$x=(x_1,x_2,\dots,x_n)$ and $y=(y_1,y_2,\dots,y_n).$ The code $C$ is {\it self-orthogonal} if $C\subseteq C^\bot$, and it is {\it self-dual} if $C = C^\bot$. The dual code of a cyclic code is cyclic. A self-dual doubly even binary code is called a {\it Type II binary code}.

Let $\mathbb{Z}_4$ denote the ring of integers modulo $4.$ A linear code $C$ of length $n$ over $\mathbb{Z}_4$ (i.e., a {\it $\mathbb{Z}_4$-code}) is a $\mathbb{Z}_4$-submodule of $\mathbb{Z}_4^n.$
Two $\mathbb{Z}_4$-codes are {\it (monomially) equivalent} if one can be obtained from the other by permuting the coordinates and (if necessary) changing the signs of certain coordinates. Codes differing by only a permutation of coordinates are called {\it permutation equivalent}.
{\it The permutation automorphism group} of a $\mathbb{Z}_4$-code $C$ is the group of all coordinate permutations that fix $C$ set-wise.
The {\it support } of a codeword $x\in \mathbb{Z}_4^n$ is the set of non-zero positions in $x.$
Denote the number of coordinates $i$ (where $i=0,1,2,3$) in a codeword $x\in \mathbb{Z}_4^n$  by $n_i(x).$
The codeword $x\in \mathbb{Z}_4^n$ is {\it even} if $n_1(x)=n_3(x)=0$.
The {\it Hamming weight} of a codeword $x$ is $wt_H(x) =n_1(x) + n_2(x) + n_3(x),$ the {\it Lee weight} of $x$ is $wt_L (x) = n_1(x) + 2n_2(x) + n_3(x),$ and the
{\it Euclidean weight} of $x$ is $wt_E (x) = n_1(x) + 4n_2(x) + n_3(x).$ It holds that $wt_E(x)\equiv x_1^2+\dots +x_n^2\ (\text{mod}\ 8)$ for every $x\in \mathbb{Z}_4^n.$ 
We will denote by $d_H(C)$, $d_L(C)$ and $d_E(C)$, the minimum Hamming weight, the minimum Lee weight and the minimum Euclidean weight of the code $C$, respectively.
 The {\it symmetrized weight enumerator} of a $\mathbb{Z}_4$-code $C$ is defined as
$$swe_C(a,b,c)=\sum_{x\in C} a^{n_0(x)}b^{n_1(x)+n_3(x)}c^{n_2(x)}.$$

Let $C$ be a $\mathbb{Z}_4$-code of length $n$. The {\it dual code} $C^\bot$ of the code $C$ is defined as
\[
C^\bot=\{x\in \mathbb{Z}_4^n\,|\,\left\langle x, y\right\rangle=0\ \text{for all}\ y\in C\},
\]
where $\left\langle x,\\ y\right\rangle=x_1y_1+x_2y_2+\dots+x_ny_n\ (\text{mod}\ 4)$ for
$x=(x_1,x_2,\dots,x_n)$ and $y=(y_1,y_2,\dots,y_n).$ The code $C$ is {\it self-orthogonal} when $C\subseteq C^\bot$ and {\it self-dual} if $C= C^\bot.$ If $C$ is a self-orthogonal  $\mathbb{Z}_4$-code, then $wt_L(c)$ is even for all $c\in C.$
A self-dual $\mathbb{Z}_4$-code of length $n$ contains exactly $2^n$ codewords.
{\it Type II $\mathbb{Z}_4$-codes} are self-dual $\mathbb{Z}_4$-codes which have the property that all
Euclidean weights are divisible by eight. {\it Type IV $\mathbb{Z}_4$-codes} are self-dual $\mathbb{Z}_4$-codes with all codewords of even
Hamming weight (see, e.g., \cite{typeIV}). A Type IV code that is also Type II is called a {\it Type IV-II $\mathbb{Z}_4$-code}.

Every $\mathbb{Z}_4$-code $C$ contains a set of $k_1+k_2$ codewords $\{c_1,c_2,\dots,c_{k_1},c_{k_1+1}, \dots, \\c_{k_1+k_2}\}$ such that every codeword in $C$ is uniquely expressible in the form
\[
\sum_{i=1}^{k_1}a_ic_i+\sum_{i=k_1+1}^{k_1+k_2}a_ic_i,
\]
where $a_i\in \mathbb{Z}_4$ and $c_i$ has at least one coordinate equal to 1 or 3, for $1\leq i\leq k_1,$
$a_i\in \mathbb{Z}_2$  and $c_i$ has all coordinates equal to 0 or 2, for $k_1+1\leq i\leq k_1+k_2.$
We say that $C$ is of {\it type} $4^{k_1}2^{k_2}.$
The matrix whose rows are  $c_i,\ 1\leq i\leq k_1+k_2,$ is called a {\it generator matrix} for $C.$ A generator matrix $G$ of a $\mathbb{Z}_4$-code $C$ is in {\it standard form} if

\begin{equation}\label{std}
G=\left[\begin{tabular}{ccc}
$I_{k_1}$&$A$&$B_1+2B_2$\\
$O$&$2I_{k_2}$&$2D$\\
\end{tabular}\right],
\end{equation}

where $A, B_1,B_2$ and $D$ are matrices with entries from $\mathbb{Z}_2$, $O$ is the $k_2\times k_1$ null matrix,
and $I_{m}$ denotes the identity matrix of order $m$.
If $C$ is a self-dual $\mathbb{Z}_4$-code of  length $n$,

then $2k_1+k_2=n$ and the matrix $B_1+2B_2$ in $G$ is of order $k_1.$ Any $\mathbb{Z}_4$-code is permutation equivalent to a code with generator matrix in standard form.

Because of $$wt_E(x+y)\equiv wt_E(x)+wt_E(y)+2\left\langle x, y\right\rangle \ (\text{mod}\ 8)$$
for all $x, y \in \mathbb{Z}_4^n,$ 
every self-orthogonal  $\mathbb{Z}_4$-code which has a generator matrix such that
all rows have Euclidean weights divisible by $8$ consists of codewords whose Euclidean weights are divisible by $8$. 

Let $C$ be a $\mathbb{Z}_4$-code of length $n.$  
There are two binary linear codes of length $n$ associated with $C$: the binary  code
$C^{(1)}=\{c\ (\text{mod}\ 2)\,|\,c\in C \}$, which is called the {\it residue code} of $C$, and the binary code $C^{(2)}=\{c\in \mathbb{Z}_2^n\,|\,2c\in C \}$, 
which is called the {\it torsion code} of $C$. 
If $C$ is a $\mathbb{Z}_4$-code of type $4^{k_1}2^{k_2}$, then $C^{(1)}$ is a binary code of dimension $k_1$ generated by the matrix
\[
\left[\begin{tabular}{ccc}
$I_{k_1}$&$A$&$B_1$\\
\end{tabular}\right].
\]
If $C$ is a self-dual $\mathbb{Z}_4$-code, then $C^{(1)}$ is doubly even and $C^{(1)}={C^{(2)}}^\bot$ (see \cite{ConS}).

According to \cite{survey}, the following statement holds.
\begin{thm}\label{Ctipa II}
Let $C$ be a Type II $\mathbb{Z}_4$-code of length $n.$ Then $C^{(2)}$ is an even binary code,
$C^{(1)}$  contains the all-ones binary vector, and $n\equiv 0\ (\text{mod}\ 8).$\\
\end{thm}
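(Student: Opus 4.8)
The plan is to extract all three conclusions from the standard-form generator matrix and the relations between a self-dual $\mathbb{Z}_4$-code and its associated binary codes. First I would recall that $C$ is self-dual, so by the fact quoted from \cite{ConS} we already have $C^{(1)}$ doubly even and $C^{(1)}=\big(C^{(2)}\big)^\bot$; dualizing gives $C^{(2)}=\big(C^{(1)}\big)^\bot$. Since $C^{(1)}$ is doubly even it is in particular self-orthogonal, hence $C^{(1)}\subseteq\big(C^{(1)}\big)^\bot=C^{(2)}$. Every codeword of $C^{(1)}$ has weight divisible by $4$, so certainly even; therefore to show $C^{(2)}$ is an even binary code it suffices to show every codeword of $C^{(2)}$ is even, which I would do by showing $C^{(2)}\subseteq\1^\bot$, equivalently $\1\in\big(C^{(2)}\big)^\bot=C^{(1)}$. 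So the even-torsion claim and the all-ones claim collapse to the single assertion $\1\in C^{(1)}$.

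To prove $\1\in C^{(1)}$, I would use the Type II hypothesis together with the identity $wt_E(x)\equiv x_1^2+\dots+x_n^2\pmod 8$ recorded in the preliminaries. Take the generator matrix $G$ of $C$ in standard form \eqref{std}; since $C$ is self-dual of type $4^{k_1}2^{k_2}$ with $n=2k_1+k_2$, the torsion code $C^{(2)}=\big(C^{(1)}\big)^\bot$ has dimension $n-k_1=k_1+k_2$. Now consider $2\cdot\1\in\mathbb{Z}_4^n$ as a candidate codeword: its Euclidean weight is $4n$, which by Type II must be divisible by $8$, forcing $n$ even — already a first divisibility constraint. The key step is to test self-orthogonality of the all-ones vector against $C$. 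For any $c\in C$ with entries $c_i$, we have $\langle \1,c\rangle=\sum_i c_i\pmod 4$. Using $wt_E(c)\equiv\sum c_i^2\pmod 8$ and $c_i^2\equiv c_i\pmod 2$ (indeed $c_i^2\in\{0,1\}$ when $c_i$ is even and $c_i^2\equiv 1$ when $c_i$ odd, and more precisely $c_i^2\equiv c_i^2$...), I would argue mod $2$: $wt_E(c)\equiv n_1(c)+n_3(c)\pmod 2$, and since $wt_E(c)\equiv 0\pmod 8$ this gives $n_1(c)+n_3(c)$ even, i.e. every codeword of $C$ has an even number of odd coordinates. Reducing mod $2$, this says exactly that every codeword of $C^{(1)}$ lies in $\1^\bot$, i.e. $C^{(1)}\subseteq\1^\bot$, equivalently $\1\in\big(C^{(1)}\big)^\bot=C^{(2)}$. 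But we already showed $C^{(1)}\subseteq C^{(2)}$ with equality of dimensions failing in general; to pin $\1$ into $C^{(1)}$ itself I would instead note $\1\in C^{(2)}$ means $2\cdot\1\in C$, and then pair $2\cdot\1$ against the rows of $G$ in standard form: $\langle 2\cdot\1, c_i\rangle \equiv 2\sum_j (c_i)_j\pmod 4$, which vanishes iff $\sum_j(c_i)_j$ is even iff the $i$-th row of $C^{(1)}$ is in $\1^\bot$ — consistent, but the cleaner route is: self-duality gives $C=C^\bot$, and $2\cdot\1\in C$ forces, for every generator row $c_i$ of the $I_{k_1}$-block, that $\sum_j (c_i)_j\equiv 0\pmod 2$; combined with doubly-evenness of $C^{(1)}$ and a dimension count on $C^{(1)}$ versus $\big(C^{(1)}\big)^\bot=C^{(2)}$, one deduces $\1$ must actually be a codeword of $C^{(1)}$ — here I would invoke that $C^{(1)}$ doubly even and $\dim C^{(1)}+\dim C^{(2)}=n$ with $C^{(1)}\subseteq C^{(2)}$, and the extra parity constraint rules out the possibility $\1\in C^{(2)}\setminus C^{(1)}$ by comparing $C^{(1)}$ with the larger doubly-even-containing code; I expect this bookkeeping to be the main obstacle.

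Finally, for $n\equiv 0\pmod 8$: once $\1\in C^{(1)}$, lift it to a codeword $c\in C$ with $c\equiv\1\pmod 2$, so every entry $c_i$ is odd, $c_i\in\{1,3\}$, and $c_i^2\equiv 1\pmod 8$. Then $wt_E(c)\equiv\sum_{i=1}^n c_i^2\equiv n\pmod 8$. Since $C$ is Type II, $wt_E(c)\equiv 0\pmod 8$, hence $n\equiv 0\pmod 8$, which is the last assertion. I would present the three parts in the order: (i) the binary-code dualities and $C^{(1)}\subseteq C^{(2)}$; (ii) the parity-of-odd-coordinates argument from the Euclidean-weight identity, yielding $\1\in C^{(1)}$ and the evenness of $C^{(2)}$; (iii) lifting $\1$ and reading off $n\equiv 0\pmod 8$. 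The delicate point — and the step I'd be most careful about — is making the leap from "$\1\in C^{(2)}$" to "$\1\in C^{(1)}$" rigorous rather than merely plausible, which is why I would lean on the reference \cite{survey} for the precise structural lemma if the dimension count proves awkward to close by hand.
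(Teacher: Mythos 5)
Your reduction of the first two claims to the single assertion $\1\in C^{(1)}$ is sound, and your final step (lifting $\1$ to an all-odd codeword $c$ with $wt_E(c)=n$, hence $8\mid n$) is correct once that assertion is available. The paper itself gives no proof of this theorem (it is quoted from \cite{survey}), so the comparison is with the standard argument; measured against that, your proposal has a genuine gap exactly where you flag it. Your parity computation $wt_E(c)\equiv n_1(c)+n_3(c)\pmod 2$ uses only that Euclidean weights are even, i.e.\ it recovers nothing beyond the (already known) fact that $C^{(1)}$ is an even code, and hence only that $\1\in\left(C^{(1)}\right)^\bot=C^{(2)}$. Since $C^{(1)}\subseteq C^{(2)}$ is in general a strict inclusion (of codimension $k_2$), nothing in your argument forces $\1$ into the smaller code $C^{(1)}$; the ``extra parity constraint'' you extract from $2\cdot\1\in C$ is again just evenness of $C^{(1)}$, so the proposed dimension-count bookkeeping cannot close. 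Indeed, for a merely self-dual code the conclusion can fail (the length-one code $C=\{0,2\}\subseteq\mathbb{Z}_4$ has $C^{(1)}=\{0\}\not\ni\1$), so any correct proof must apply divisibility by $8$ to codewords your argument never tests.

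The missing idea is to run the implication in the opposite direction: prove first, and directly, that $C^{(2)}$ is even. For $c\in C^{(2)}$ the codeword $2c\in C$ has $n_1(2c)=n_3(2c)=0$ and $n_2(2c)=wt_H(c)$, so $wt_E(2c)=4\,wt_H(c)$; the Type II hypothesis forces $8\mid 4\,wt_H(c)$, i.e.\ $wt_H(c)$ is even. Thus $C^{(2)}\subseteq\1^\bot$, which by the duality $C^{(1)}=\left(C^{(2)}\right)^\bot$ gives $\1\in C^{(1)}$ immediately, and your concluding computation then yields $n\equiv 0\pmod 8$. This one-line application of the Euclidean weight condition to the order-two codewords is precisely the step that separates Type II from mere self-duality, and it is the step your proposal lacks.
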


A {\it Boolean function} on $n$ variables is a mapping $f: \mathbb{F}_2^n \rightarrow \mathbb{F}_2.$ Its {\it  truth table} is the $(0, 1)$ sequence $(f((0,\dots ,0)),  f((0,\dots , 0,1)),\dots , f((1,\dots ,1))).$
The {\it  Walsh-Hadamard transformation} of  $f$ is
$$W_f (v) =\sum_{x\in \mathbb{F}_2^n} (-1)^{f(x)+\left\langle v, x\right\rangle}.$$
A  {\it  bent function} is a Boolean function $f$ such that  $W_f (v) =\pm 2^{\frac{n}{2}},$ for every $v\in \mathbb{F}_2^n.$ If $f$ is bent, then the number of its variables is an even number.
It was proven by Rothaus in 1960s  (\cite{Roth}) that the number of zeros of a bent function equals 
$$2^{n-1}\left( \pm \frac{1}{2^{\frac{n}{2}}}+1\right).$$

A {\it generalized Boolean function} on $n$ variables is a mapping $ f: \mathbb{F}_2^n \rightarrow \mathbb{Z}_{2^h}.$  The {\it generalized Walsh-Hadamard transformation} of  $f$ is
$$\tilde{f}(v) =\sum_{x\in \mathbb{F}_2^n}\omega^{f(x)} (-1)^{\left\langle v, x\right\rangle},$$
where $\omega=e^{\frac{2\pi i}{2^h}}.$
A  {\it generalized bent function (gbent function)} is a generalized Boolean function $f$ such that  $|\tilde{f} (v) |= 2^{\frac{n}{2}},$ for every $v\in \mathbb{F}_2^n.$ 
In this paper we will consider generalized bent functions from $\mathbb{F}_2^n$ into $\mathbb{Z}_4.$

\section{Codes constructed from gbent functions} \label{constr}
According to \cite{Sch}, the following theorem holds.

\begin{thm}\label{bentGbent}
	Let $m\geq 3$ be odd, and let $a, b: \mathbb{F}_2^{m-1}\rightarrow \mathbb{F}_2$ be bent functions. Then
 $f: \mathbb{F}_2^{m}\rightarrow \mathbb{Z}_4$ given by
$$f(x,y)=2a(x)(1+y)+2b(x)y+y,\  x\in  \mathbb{F}_2^{m-1}, y\in \mathbb{F}_2,$$
is a gbent function.
\end{thm}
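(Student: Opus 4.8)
The plan is to evaluate the generalized Walsh--Hadamard transform $\tilde f$ directly. Since the codomain is $\mathbb{Z}_4$ we have $\omega=e^{2\pi i/4}=i$, so
\[
\tilde f(v)=\sum_{(x,y)\in\mathbb{F}_2^{m-1}\times\mathbb{F}_2} i^{f(x,y)}(-1)^{\left\langle v,(x,y)\right\rangle}.
\]
Writing $v=(u,t)$ with $u\in\mathbb{F}_2^{m-1}$ and $t\in\mathbb{F}_2$, I would first split this sum according to the value of $y\in\{0,1\}$ and simplify $f$ on each of the two slices.

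For $y=0$ we get $f(x,0)=2a(x)$, hence $i^{f(x,0)}=i^{2a(x)}=(-1)^{a(x)}$. For $y=1$ we get $f(x,1)=4a(x)+2b(x)+1$, and because $4a(x)\equiv 0\pmod 4$ this equals $2b(x)+1$ in $\mathbb{Z}_4$, so $i^{f(x,1)}=i^{2b(x)+1}=i\,(-1)^{b(x)}$. Using $\left\langle v,(x,y)\right\rangle=\left\langle u,x\right\rangle+ty$, the two slices then produce exactly the ordinary Walsh--Hadamard transforms of $a$ and $b$:
\[
\tilde f(v)=\sum_{x\in\mathbb{F}_2^{m-1}}(-1)^{a(x)+\left\langle u,x\right\rangle}+i\,(-1)^{t}\sum_{x\in\mathbb{F}_2^{m-1}}(-1)^{b(x)+\left\langle u,x\right\rangle}=W_a(u)+i\,(-1)^{t}\,W_b(u).
\]

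Finally I would invoke bentness: since $m$ is odd, $m-1$ is even, and as $a$ and $b$ are bent on $m-1$ variables we have $W_a(u),W_b(u)\in\{\pm 2^{(m-1)/2}\}$; in particular both are real. Thus $W_a(u)$ is the real part and $(-1)^{t}W_b(u)$ the imaginary part of $\tilde f(v)$, whence
\[
|\tilde f(v)|^2=W_a(u)^2+W_b(u)^2=2^{m-1}+2^{m-1}=2^{m},
\]
so $|\tilde f(v)|=2^{m/2}$ for every $v\in\mathbb{F}_2^m$, which is the definition of $f$ being gbent.

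I do not anticipate any genuine obstacle. The only two points that require a little care are the vanishing of the $a$-contribution on the slice $y=1$, which rests on $4a(x)\equiv 0\pmod 4$, and the observation that the $y=0$ and $y=1$ slices land in orthogonal (purely real versus purely imaginary) components of $\tilde f(v)$, so that the modulus is computed with no cross term and reduces cleanly to the Pythagorean sum $W_a(u)^2+W_b(u)^2$.
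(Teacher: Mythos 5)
Your proof is correct: the slice decomposition gives $\tilde f(u,t)=W_a(u)+i(-1)^tW_b(u)$, and since $a,b$ are bent on the even number $m-1$ of variables both Walsh transforms are real of absolute value $2^{(m-1)/2}$, so $|\tilde f(v)|^2=2^{m-1}+2^{m-1}=2^m$ with no cross term. Note that the paper does not prove this statement at all --- it quotes it from Schmidt's paper \cite{Sch} --- and your direct evaluation is exactly the standard argument one would find there.
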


\begin{lem}\label{cf}
Let $m\geq 3$ be odd and let $f :\mathbb{F}_2^{m}\rightarrow \mathbb{Z}_4$ be a gbent function constructed from bent functions $a$ and $b$ as in Theorem \ref{bentGbent}. 
Let  $c_f$ be a codeword $$(f((0,\dots ,0)),  f((0,\dots , 0,1)),\dots , f((1,\dots ,1)))\in \mathbb{Z}_4^{2^{m}}.$$
Then $wt_E(c_f)\equiv0\ (\text{mod}\ 8)$ and  $\langle c_f, c_f \rangle = 0$.
\end{lem}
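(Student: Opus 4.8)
My plan is to compute $wt_E(c_f)$ exactly and then read off both conclusions from it. First note that the second assertion is a consequence of the first: by the identity $wt_E(x)\equiv x_1^2+\dots+x_n^2\ (\text{mod}\ 8)$ recorded in Section~\ref{pre} together with $\langle x,x\rangle\equiv x_1^2+\dots+x_n^2\ (\text{mod}\ 4)$, once we know $8\mid wt_E(c_f)$ we immediately get $\langle c_f,c_f\rangle\equiv 0\ (\text{mod}\ 4)$, i.e.\ $\langle c_f,c_f\rangle=0$ in $\mathbb{Z}_4$. So everything reduces to showing $8\mid wt_E(c_f)$.

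To get $wt_E(c_f)$ I would split the $2^m$ coordinates of $c_f$ according to the last coordinate $y\in\mathbb{F}_2$ of the argument. For $y=0$ the formula of Theorem~\ref{bentGbent} gives $f(x,0)=2a(x)$, so those $2^{m-1}$ entries lie in $\{0,2\}$, exactly $w_a:=|\{x\in\mathbb{F}_2^{m-1}:a(x)=1\}|$ of them being $2$; for $y=1$ it gives $f(x,1)=2b(x)+1$, so those $2^{m-1}$ entries lie in $\{1,3\}$, exactly $w_b:=|\{x:b(x)=1\}|$ of them being $3$ and $2^{m-1}-w_b$ of them being $1$. Hence $n_1(c_f)=2^{m-1}-w_b$, $n_2(c_f)=w_a$, $n_3(c_f)=w_b$, and therefore
$$wt_E(c_f)=n_1(c_f)+4n_2(c_f)+n_3(c_f)=2^{m-1}+4w_a.$$

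Next I would pin down the parity of $w_a$ using the Rothaus count quoted in Section~\ref{pre}: since $a$ is bent on the even number $m-1$ of variables, its number of zeros equals $2^{m-2}\bigl(\pm 2^{-(m-1)/2}+1\bigr)=2^{m-2}\pm 2^{(m-3)/2}$, so $w_a=2^{m-1}-\bigl(2^{m-2}\pm 2^{(m-3)/2}\bigr)=2^{m-2}\mp 2^{(m-3)/2}$. For $m=3$ this is $2\mp 1$, which is odd, so $4w_a\equiv 4\ (\text{mod}\ 8)$ and $wt_E(c_f)=4+4w_a\equiv 8\equiv 0\ (\text{mod}\ 8)$. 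For $m\geq 5$ both $2^{m-2}$ and $2^{(m-3)/2}$ are even, so $w_a$ is even, giving $4w_a\equiv 0\ (\text{mod}\ 8)$, and moreover $2^{m-1}\equiv 0\ (\text{mod}\ 8)$ since $m-1\geq 4$; hence $wt_E(c_f)\equiv 0\ (\text{mod}\ 8)$ in this case as well.

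The argument is essentially a one-line computation once the composition counts $(n_0,n_1,n_2,n_3)$ of $c_f$ are in hand, so there is no real obstacle; the only point that requires care is that the conclusion for $m=3$ genuinely uses that $w_a$ is odd (there $2^{m-1}=4\not\equiv 0\ (\text{mod}\ 8)$), whereas for $m\geq 5$ the divisibility is already forced by $2^{m-1}$ and $4w_a$ separately. As a consistency check I would also verify $n_0(c_f)+n_1(c_f)+n_2(c_f)+n_3(c_f)=2^m$, with $n_0(c_f)=2^{m-1}-w_a$, to make sure the composition counts are correct.
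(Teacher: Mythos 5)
Your proof is correct and follows essentially the same route as the paper: determine the composition of $c_f$ from $f(x,0)=2a(x)$ and $f(x,1)=2b(x)+1$, invoke Rothaus's count of zeros of the bent function $a$ to compute $wt_E(c_f)=2^{m-1}+2^m\mp 2^{(m+1)/2}$, and deduce $\langle c_f,c_f\rangle=0$ from $wt_E(x)\equiv x_1^2+\dots+x_n^2\pmod 8$. You are merely more explicit than the paper in separating the $m=3$ case (where the odd parity of $w_a$ is genuinely needed) from $m\geq 5$.
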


\begin{proof}
By the construction, $c_f$ has $2^{m-1}$ even and $2^{m-1}$ odd coordinates. The number of zeros in $c_f$ is equal to the number of zeros in the bent function $a.$ This number is equal to $2^{m-2}\left( \pm \frac{1}{2^{\frac{m-1}{2}}}+1\right).$
It follows that $wt_E(c_f)=2^{m-1}+2^{m+1}\pm 2^{\frac{m+1}{2}}-2^{m}.$
So, $wt_E(c_f)$ is divisible by 8.
Since $wt_E(x)\equiv x_1^2+\dots +x_n^2\ (\text{mod}\ 8)$ for every $x\in \mathbb{Z}_4^n$, we have $\langle c_f, c_f \rangle = 0$.
\end{proof}

\begin{remark}
Let us notice that for $c_f=(f((0,\dots ,0)),  f((0,\dots , 0,1)),\dots , f((1,\dots ,1)))$ and $m=3$ Euclidean weight $wt_E(c_f)$ takes value $8$ or $16$.  
If $m\geq 5,$ it holds that $wt_E(c_f)\geq \frac{2^m}{3}+8$.
\end{remark}

\subsection{Codes over $\mathbb{Z}_4$}
An $n\times n$ {\it circulant matrix} is a matrix of the form 
$$\left[\begin{tabular}{ccccc}
$x_0$ & $x_{n-1}$ &\dots& $x_2$ & $x_1$\\
$x_{1}$ & $x_{0}$ & $x_{n-1}$ & \dots & $x_{2}$\\
\vdots  &   & & & \vdots \\
$x_{n-1}$ & \dots & \dots & $x_1$& $x_0$\\
\end{tabular}\right].$$

\begin{thm}\label{C_f}
Let $m\geq 3$ be odd, and let $a, b: \mathbb{F}_2^{m-1}\rightarrow \mathbb{F}_2$ be bent functions. Let  $f: \mathbb{F}_2^{m}\rightarrow \mathbb{Z}_4$ be a gbent function given by
$f(x,y)=2a(x)(1+y)+2b(x)y+y,\ $  $x\in  \mathbb{F}_2^{m-1},$ $y\in \mathbb{F}_2,$ and 
let  $c_f$ be a codeword $$(f((0,\dots ,0)),  f((0,\dots , 0,1)),\dots , f((1,\dots ,1)))\in \mathbb{Z}_4^{2^{m}}.$$
Let $C_f$ be a $\mathbb{Z}_4$-code generated by the $2^m \times 2^m$ circulant matrix whose first row is the codeword $c_f.$
Then $C_f$ is a self-orthogonal $\mathbb{Z}_4$-code of length $2^{m}$ and all its codewords have Euclidean weights divisible by $8.$ The residue code of $C_f$ has dimension $2.$
\end{thm}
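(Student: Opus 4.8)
The plan is to verify each assertion in turn, leaning on Lemma \ref{cf} and the structural facts about $\mathbb{Z}_4$-codes from Section \ref{pre}. The key observation is that $C_f$ is generated by the circulant matrix with first row $c_f$, so every row of the generator matrix is a cyclic shift of $c_f$; denote the $i$-th shift by $\sigma^i(c_f)$. Since a cyclic shift only permutes coordinates, each $\sigma^i(c_f)$ has the same multiset of entries as $c_f$, hence the same Euclidean weight, which is divisible by $8$ by Lemma \ref{cf}. This immediately gives $\langle \sigma^i(c_f), \sigma^i(c_f)\rangle = 0$ for every $i$.

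For self-orthogonality, by the criterion quoted in the excerpt (the identity $wt_E(x+y) \equiv wt_E(x) + wt_E(y) + 2\langle x, y\rangle \pmod 8$), it suffices to show that every row of the generator matrix has Euclidean weight divisible by $8$ — which we have just established — together with the fact that the inner product of any two distinct rows is $0$. So the heart of the matter is to prove $\langle \sigma^i(c_f), \sigma^j(c_f)\rangle = 0$ for all $i \neq j$, equivalently $\langle c_f, \sigma^k(c_f)\rangle = 0$ for all $k \not\equiv 0$. I expect this to be the main obstacle. The approach I would take: split each coordinate of $c_f$ according to the parity of the last bit $y$; the construction $f(x,y) = 2a(x)(1+y) + 2b(x)y + y$ shows that on the $y=0$ half the codeword equals $2a$ (values in $\{0,2\}$) and on the $y=1$ half it equals $2b + 1$ (values in $\{1,3\}$). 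A cyclic shift by an even amount preserves this even/odd coordinate pattern, while a shift by an odd amount interchanges the two halves. In the even-shift case the inner product decomposes as a sum over the $a$-part and the $b$-part, each of which is $4$ times a sum of products of $\{0,1\}$-values and hence $\equiv 0 \pmod 4$; in the odd-shift case one gets cross terms of the form $2a(x)\cdot(2b(x')+1)$, again a multiple of $2$ times an even quantity — and I would use the bent property (the count of zeros of $a$ and of $b$, controlling the number of nonzero even coordinates) to force each contribution to vanish mod $4$. One clean alternative is to observe that $\langle c_f + \sigma^k(c_f), c_f + \sigma^k(c_f)\rangle = wt_E(c_f) + wt_E(\sigma^k(c_f)) + 2\langle c_f, \sigma^k(c_f)\rangle \pmod 8$ and to compute $wt_E(c_f + \sigma^k(c_f))$ directly from the structure of the sum, showing it is $\equiv 0 \pmod 8$ (this is where a short case analysis on the parity of $k$ and an appeal to the zero-count of bent functions does the work), whence $2\langle c_f, \sigma^k(c_f)\rangle \equiv 0 \pmod 8$, i.e. $\langle c_f, \sigma^k(c_f)\rangle \equiv 0 \pmod 4$.

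Finally, for the claim that the residue code $C_f^{(1)}$ has dimension $2$: reducing $c_f$ mod $2$ gives the binary vector that is $0$ on the $y=0$ half and $1$ on the $y=1$ half (since the even entries become $0$ and the odd entries become $1$). Thus $C_f^{(1)}$ is the binary cyclic code generated by this particular vector $\overline{c_f}$ of length $2^m$ and its shifts. Writing $2^m = 2 \cdot 2^{m-1}$, the vector $\overline{c_f}$ arranged by the last coordinate is an alternating-block pattern; I would show that its cyclic shifts span exactly a $2$-dimensional space — concretely, $\overline{c_f}$ and $\sigma(\overline{c_f})$ are independent, while every further shift is a combination of these two (equivalently, the ideal generated by the associated polynomial in $\mathbb{F}_2[z]/(z^{2^m}-1)$ has dimension $2$; since $z^{2^m}-1 = (z-1)^{2^m}$ over $\mathbb{F}_2$, the polynomial of $\overline{c_f}$ is divisible by $(z-1)^{2^m-2}$ but not $(z-1)^{2^m-1}$, giving dimension exactly $2$). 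The self-orthogonality just proved also shows $\dim C_f^{(1)} \le \dim C_f \le 2^{m-1}$, but the precise value $2$ comes from this cyclic-code computation; I expect it to be routine once the block structure of $\overline{c_f}$ is written out.
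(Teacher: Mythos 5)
Your overall skeleton matches the paper's: reduce to $\langle c_f,\sigma^k(c_f)\rangle=0$ for $k\neq 0$, split into cases by the parity of the shift using the alternation of even ($2a(x)$) and odd ($2b(x)+1$) coordinates, and then get divisibility of all Euclidean weights by $8$ from the polarization identity. Your argument for $\dim C_f^{(1)}=2$ is correct and complete (the shifts of $(0,1,0,1,\dots)$ span only a $2$-dimensional space). However, the step you yourself identify as the heart of the matter is where the sketch breaks down, in both cases. In the even-shift case you claim the inner product is ``$4$ times a sum of products of $\{0,1\}$-values''; this is true for the $y=0$ half ($2a\cdot 2a'=4aa'$) but false for the $y=1$ half, where the entries are the \emph{odd} numbers $2b+1\in\{1,3\}$ and each individual product is $\equiv\pm1\pmod 4$. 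What saves the computation is a counting argument: writing the sum of the $2^{m-1}$ odd$\times$odd products as $\alpha_5-\alpha_6-\alpha_7+\alpha_8=2^{m-1}-2(\alpha_6+\alpha_7)$ and showing $\alpha_6=\alpha_7$ (the numbers of $(1,3)$- and $(3,1)$-pairs agree because the shift acts bijectively on the $y=1$ positions), which gives $2^{m-1}-4\alpha_6\equiv 0\pmod 4$ for $m\geq 3$. This counting idea is absent from your sketch.

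In the odd-shift case your plan is to use the bent zero-count to make \emph{each} cross-contribution vanish mod $4$; this fails for $m=3$. Each term $2a(x)(2b(x')+1)\equiv 2a(x)\pmod 4$, so the half coming from the even positions of $c_f$ contributes $2\cdot wt(a)$, and $wt(a)=2^{m-2}\pm 2^{(m-3)/2}$ is odd when $m=3$, giving $2\pmod 4$ — not $0$. The correct argument pairs the two halves: the even positions of the shifted word also contribute $2\cdot wt(a)$ (the same $2^{m-1}$ values of $2a$ reappear there), so the total is $4\,wt(a)\equiv 0\pmod 4$; equivalently, as in the paper, one has $2s_2$ where $s_2$ is a sum of $2n_2(c_f)$ odd numbers and is therefore even. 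So the conclusion is reachable by your route, but the two justifications you actually propose for the central orthogonality computation are incorrect as stated, and the pairing/counting arguments that make them work are the missing content.
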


\begin{proof}
By Lemma \ref{cf}, $\langle c_f, c_f \rangle=0$.

Let $c_i$ and $c_j$, $i \neq j$, be the $i$-th and the $j$-th row of the circulant generator matrix of $C_f$. If one of the indices is odd and the other is even, then
$$\left\langle c_i, c_j\right\rangle\equiv 0\cdot s_1+2\cdot s_2 \ (\text{mod}\ 4),$$
where $s_1$ is the sum of $2n_0(c_f)$ ones and threes, and $s_2$ is the sum of $2n_2(c_f)$ ones and threes. So, 
$s_2$ is an even number. It follows that $\langle c_i, c_j \rangle=0$.

If both of the indices are odd or both of them are even, then
$$\left\langle c_i, c_j\right\rangle\equiv \alpha_1 0\cdot 0 +\alpha_2 0\cdot 2 +\alpha_3 2\cdot 0+\alpha_42\cdot 2+\alpha_5 1\cdot 1+\alpha_6 1\cdot 3+\alpha_7 3\cdot 1+\alpha_8 3\cdot 3\ (\text{mod}\ 4),$$
where $\alpha_1+\alpha_2+\alpha_3+\alpha_4=\alpha_5+\alpha_6+\alpha_7+\alpha_8=2^{m-1}$ and $\alpha_6=\alpha_7.$
It follows that $$\left\langle c_i, c_j\right\rangle\equiv \alpha_5-2\alpha_6+\alpha_8 \ (\text{mod}\ 4)\equiv 2^{m-1}-4\alpha_6 \ (\text{mod}\ 4).$$ So, $c_i$ and $c_j$ are orthogonal codewords in this case as well.
Therefore, $C_f$ is a self-orthogonal $\mathbb{Z}_4$-code.

By Lemma \ref{cf}, $c_f$ has Euclidean weight divisible by 8. Since $C_f$ is a self-orthogonal code, all codewords of $C_f$ have Euclidean weights divisible by $8.$

It holds that $f(x,0)=2a(x)$ and $f(x,1)=2b(x)+1,$ for $x\in \mathbb{F}_2^{m-1}.$ Consequently,  even and odd coordinates alternate in the codeword $c_f$ and the residue code $C_f^{(1)}$ has dimension 2.

\end{proof}

\begin{exm} \label{ex1}
There are exactly eight bent functions on two variables. 
We constructed gbent functions $f$ from all pairs $(a,b)$ of bent functions $a, b: \mathbb{F}_2^2\rightarrow \mathbb{F}_2$, as given in Theorem \ref{bentGbent}. In that way, $64$ codewords $c_f\in\mathbb{Z}_4^8$ were obtained (see Lemma \ref{cf}). 
Among associated codes $C_f,$ constructed as in Theorem \ref{C_f}, there are two inequivalent codes.
One of them is the code $C_{f_{2^3\_1}}$
generated by the codeword  $c_{f_{2^3\_1}}=(0,1,0,1,0,3,2,1),$ which is obtained from the pair 
$(x_1x_2,x_1+x_1x_2).$
The other code
arises from the pair $(x_1x_2,x_1x_2),$ i.e., from the codeword $c_{f_{2^3\_2}}=(0,1,0,1,0,1,2,3)$. Both codes are self-orthogonal codes of type $4^22^3$ and their permutation automorphism groups have order $64.$
The codes obtained from the remaining 62 gbent functions on two variables are equal to $C_{f_{2^3\_1}}$ or to $C_{f_{2^3\_2}}.$ 

\end{exm}

\subsubsection{Type II codes over $\mathbb{Z}_4$}

\begin{thm}\label{prosC_f}
Let $m\geq 3$ be odd, and let $a, b: \mathbb{F}_2^{m-1}\rightarrow \mathbb{F}_2$ be bent functions. Let  $f: \mathbb{F}_2^{m}\rightarrow \mathbb{Z}_4$ be a gbent function given by
$f(x,y)=2a(x)(1+y)+2b(x)y+y,\ $  $x\in  \mathbb{F}_2^{m-1},$ $y\in \mathbb{F}_2,$ and 
let  $c_f$ be a codeword $$(f((0,\dots ,0)),  f((0,\dots , 0,1)),\dots , f((1,\dots ,1)))\in \mathbb{Z}_4^{2^{m}}.$$
Let $C_f$ be a cyclic $\mathbb{Z}_4$-code of type $4^22^{k_2}$ generated by $c_f.$
Let $G$ be a generator matrix of $C_f$ in standard form.  
Let $k_3=2^{m}-2^2-k_2$ and let
 $$\widetilde{D}=\left[\begin{tabular}{ccc}
$O$&$2I_{k_3}$&$H$\\

\end{tabular}\right]$$ be a $k_3\times 2^{m}$ matrix, where $O$ is the $k_3\times (k_2+2)$ null matrix and $H$ is a $k_3\times 2$ matrix whose rows $h_i, 1\leq i \leq k_3$ are defined as follows.

If $k_2$ is odd, then $$h_i=\left\{\begin{tabular}{cc}
$(0,2),$& if $i$ is odd\\
$(2,0),$& if $i$ is even\\
\end{tabular}\right..$$

If $k_2$ is even, then $$h_i=\left\{\begin{tabular}{cc}
$(2,0),$& if $i$ is odd\\
$(0,2),$& if $i$ is even\\
\end{tabular}\right..$$
\begin{enumerate}
\item[(i)] The code $\widetilde{C_f}$ generated by the matrix $\widetilde{G}=\left[\begin{tabular}{c}
$G$\\
$\widetilde{D}$\\
\end{tabular}\right]$
is a Type II $\mathbb{Z}_4$-code of length $2^{m}.$ 
\item[(ii)] If $m\geq 5,$ then $\widetilde{C_f}$ is a Type IV $\mathbb{Z}_4$-code.
\item[(iii)] Up to equivalence, $\widetilde{C_f}$ does not depend on the choice of bent functions $a$ and $b$.
\end{enumerate}
\end{thm}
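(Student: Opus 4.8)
The plan is to prove the three parts in order, with part~(i) as the main computation, part~(ii) via a parity analysis of the number of $2$'s in a codeword, and part~(iii) by showing that $\widetilde{C_f}$ is literally the \emph{same} code for every admissible pair $(a,b)$.

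Throughout, let $\sigma$ denote the cyclic shift, put $u=(0,1,0,1,\dots)$ and $v=\sigma(u)=(1,0,1,0,\dots)$ in $\mathbb F_2^{2^m}$, and let $D=\langle u,v\rangle$. As observed in the proof of Theorem~\ref{C_f}, even and odd coordinates alternate in $c_f$, so $c_f\bmod 2=u$; since $C_f$ is cyclic, every cyclic shift of $c_f$ lies in $C_f$, and $C_f^{(1)}=D$. Denote by $\widetilde d_i$ the $i$-th row of $\widetilde D$; it has a $2$ in coordinate $k_2+2+i$, the pair $h_i$ in the last two coordinates, and zeros elsewhere.

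For part~(i) I would first check self-orthogonality on the generators of $\widetilde G$. The rows of $G$ are mutually and self-orthogonal, being codewords of the self-orthogonal code $C_f$ (Theorem~\ref{C_f}). For $i\neq j$ the rows $\widetilde d_i,\widetilde d_j$ overlap only in the last two coordinates and $h_i\cdot h_j\equiv 0\pmod 4$ (each $h_i$ is $(0,2)$ or $(2,0)$), while $\langle\widetilde d_i,\widetilde d_i\rangle=2\cdot 2+h_i\cdot h_i\equiv 0\pmod 4$. For a row $\widetilde d_i$ and a row $g$ of $G$: if $g$ is one of the order-$2$ rows, then $\langle\widetilde d_i,g\rangle$ is a sum of products of even entries, hence $\equiv 0\pmod 4$; if $g$ is one of the two order-$4$ rows, then $g\bmod 2$ is a nonzero element of $D$, so $g_\ell\bmod 2$ depends only on the parity of $\ell$, and $\langle\widetilde d_i,g\rangle\equiv 2g_{k_2+2+i}+2g_{2^m-j}\pmod 4$, where $j\in\{0,1\}$ is read off $h_i$ (namely $j=0$ for $h_i=(0,2)$, $j=1$ for $h_i=(2,0)$). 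The two cases in the definition of $H$ are exactly what force $k_2+2+i\equiv 2^m-j\pmod 2$, hence $g_{k_2+2+i}\equiv g_{2^m-j}\pmod 2$ and $\langle\widetilde d_i,g\rangle\equiv 0\pmod 4$; this is the step I expect to require the most care, since it is the only place where the parity bookkeeping behind $H$ has to come out exactly right. Thus $\widetilde{C_f}$ is self-orthogonal. Next, a rank count: the $k_3$ new order-$2$ generators reduce modulo $2$ to vectors $e_{k_2+2+i}+(h_i/2)$ with leading $1$ in coordinate $k_2+2+i\in\{k_2+3,\dots,2^m-2\}$, which is disjoint from the pivot columns $1,\dots,k_2+2$ of a reduced generator matrix of $C_f^{(2)}$; hence $\widetilde{C_f}^{(2)}$ has dimension $\ge (k_2+2)+k_3=2^m-2$, while $\widetilde{C_f}^{(1)}=C_f^{(1)}=D$ still has dimension $2$ because all new rows are even. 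So $|\widetilde{C_f}|\ge 4^2\cdot 2^{2^m-4}=2^{2^m}$, and self-orthogonality gives the reverse inequality, so $\widetilde{C_f}$ is self-dual of type $4^2 2^{2^m-4}$. Finally every row of $G$ has Euclidean weight divisible by $8$ (Theorem~\ref{C_f}), and every $\widetilde d_i$ has exactly two nonzero coordinates, both equal to $2$, hence Euclidean weight $8$; combined with $wt_E(x+y)\equiv wt_E(x)+wt_E(y)+2\langle x,y\rangle\pmod 8$ and self-orthogonality this propagates to all codewords, so $\widetilde{C_f}$ is Type~II.

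For part~(ii), note that in any self-orthogonal $\mathbb Z_4$-code $wt_H(c)=wt_H(c\bmod 2)+n_2(c)$, since the odd coordinates of $c$ are exactly the support of $c\bmod 2$. I would then show that for the Type~II code $\widetilde{C_f}$ the parity of $n_2(c)$ depends only on $\bar c:=c\bmod 2\in\widetilde{C_f}^{(1)}=D$: two lifts of $\bar c$ differ by $2b$ with $b\in\widetilde{C_f}^{(2)}$, and a short computation gives $n_2(c+2b)\equiv n_2(c)+wt_H(b)-\langle b,\bar c\rangle\pmod 2$, where $\langle b,\bar c\rangle=0$ because $2b\in\widetilde{C_f}=\widetilde{C_f}^{\bot}$, and $wt_H(b)$ is even because $wt_E(2b)=4\,wt_H(b)\equiv 0\pmod 8$. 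Writing $\nu(\bar c)$ for this parity, we get $wt_H(c)\equiv wt_H(\bar c)+\nu(\bar c)\pmod 2$; since the four elements $0,u,v,u+v$ of $D$ all have even Hamming weight (here $m\ge 3$ makes $2^{m-1}$ even), it remains to show $\nu\equiv 0$ on $D$. Taking the lift $c_f$ of $u$ gives $n_2(c_f)=\#\{x:a(x)=1\}=wt_H(a)=2^{m-2}\mp 2^{(m-3)/2}$, which is even precisely when $m\ge 5$; $\sigma(c_f)\in C_f$ is a lift of $v$ with the same number of $2$'s; and $c_f+\sigma(c_f)$ is a lift of $u+v$ all of whose coordinates are odd, so $n_2=0$. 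Hence for $m\ge 5$ we have $\nu\equiv 0$, every codeword of $\widetilde{C_f}$ has even Hamming weight, and $\widetilde{C_f}$ is Type~IV, hence Type~IV-II by~(i) (the same computation shows $\nu(u)=1$ for $m=3$, so the bound is sharp).

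For part~(iii), observe that since $\widetilde{C_f}$ is self-dual with residue code $D$, its torsion code is $\widetilde{C_f}^{(2)}=D^{\bot}$, which depends only on $m$. I claim $\widetilde{C_f}=\langle c_f,\sigma(c_f)\rangle_{\mathbb Z_4}+2D^{\bot}$: the right-hand side is contained in $\widetilde{C_f}$ (both summands are), $\langle c_f,\sigma(c_f)\rangle_{\mathbb Z_4}$ has order $16$ (it reduces onto $D$ and meets $2\mathbb Z_4^{2^m}$ in $2D$), and it meets $2D^{\bot}$ in $2D$ (here $D\subseteq D^{\bot}$ since $D$ is self-orthogonal), so the sum has order $16\cdot 2^{2^m-2}/4=2^{2^m}=|\widetilde{C_f}|$. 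Thus $\widetilde{C_f}$ depends on $(a,b)$ only through $c_f\bmod 2D^{\bot}$. For two admissible pairs, $c_f-c_{f'}=2w$, where $w$ equals $a\oplus a'$ on the odd positions and $b\oplus b'$ on the even positions; $w\in D^{\bot}$ amounts to $wt_H(a\oplus a')$ and $wt_H(b\oplus b')$ being even, i.e.\ to $wt_H(a)\equiv wt_H(a')$ and $wt_H(b)\equiv wt_H(b')\pmod 2$, which always holds because any bent function on the even number $m-1$ of variables has Hamming weight $2^{m-2}\mp 2^{(m-3)/2}$, whose parity is independent of the sign. Hence $\widetilde{C_f}$ is in fact one and the same code for all choices of $a$ and $b$, which is stronger than~(iii).
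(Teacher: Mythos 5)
Your proof is correct. Part (i) follows essentially the paper's path (orthogonality checks on the generators of $\widetilde{G}$, Euclidean weights of the rows, then self-duality), though you are more explicit than the paper both in the parity bookkeeping behind $H$ and in justifying the type $4^2 2^{2^m-4}$ via the pivot-column count. Parts (ii) and (iii) take genuinely different routes. For (ii) the paper argues in two lines: since $c \bmod 2$ lies in the residue code, $n_1(c)+n_3(c)\in\{0,2^{m-1},2^m\}$, and for $m\ge 5$ these are all divisible by $8$, so $wt_E(c)\equiv 0\pmod 8$ forces $4n_2(c)\equiv 0\pmod 8$, i.e.\ $n_2(c)$ even, whence $wt_H(c)=(n_1(c)+n_3(c))+n_2(c)$ is even. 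That is shorter than your ``parity of $n_2$ is an invariant of the residue class'' argument, though yours isolates more transparently why $m=3$ fails (through $wt_H(a)=2^{m-2}\mp 2^{(m-3)/2}$ being odd exactly there). For (iii) the paper invokes the mass-formula results of Gaborit and Pless--Leon--Fields to count the four Type II codes with the prescribed residue code and then checks the four candidate generator matrices are equivalent; your argument is self-contained, avoids those citations, and proves the stronger statement that $\widetilde{C_f}=\langle c_f,\sigma(c_f)\rangle_{\mathbb{Z}_4}+2D^{\perp}$ is literally the \emph{same} code for every admissible pair $(a,b)$ --- in effect you show that the self-dual extension of $C_f$ with residue code $D$ is unique, and the closing parity observation ($wt_H(a)\equiv wt_H(a')\pmod 2$ for any two bent functions on $m-1$ variables) is correct for all odd $m\ge 3$. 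This buys a cleaner structural explanation of (iii) at no extra cost, while the paper's route for (ii) buys brevity.
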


\begin{proof}
 (i) By Theorem \ref{C_f}, $C_f$ is a self-orthogonal cyclic $\mathbb{Z}_4$-code generated by $c_f.$ It is of type $4^{2}2^{k_2}$ and length $2^{m},$ and all its codewords have Euclidean weights divisible by $8.$ 

The first and the second row of the matrix $G$, namely $g_1$ and $g_2$, are the only non-even rows in $G.$

Let $\widetilde{d}_i$ be the $i$-th row of the matrix $\widetilde{D}$.
Then $\left\langle g_1, \widetilde{d}_i\right\rangle=0$ and $\left\langle g_2, \widetilde{d}_i\right\rangle=0,$ for all $i=1,\dots,k_3$. 
Therefore, $\widetilde{C_f}$ is  a self-orthogonal $\mathbb{Z}_4$-code of type $4^22^{2^{m}-2^2}$, i.e., $\widetilde{C_f}$ is a self-dual $\mathbb{Z}_4$-code.

Moreover, all rows in $\widetilde{D}$ have Euclidean weight $8.$ It follows that Euclidean weights of all codewords in $\widetilde{C_f}$ are divisible by $8.$ 
We conclude that $\widetilde{C_f}$  is a Type II $\mathbb{Z}_4$-code of length $2^{m}.$

(ii) Let $m\geq 5$ and let $c\in \widetilde{C_f}.$ It holds $n_1(c)+n_3(c)\in \{0,2^{m-1},2^m\}.$ The Euclidean weight of $c$ is divisible by $8.$ So, $n_2(c)$ is an even number. 
It follows that $c$ has even Hamming weight. Therefore, $\widetilde{C_f}$  is a Type IV $\mathbb{Z}_4$-code for $m\geq 5.$ 

(iii) Let 
$\left[\begin{tabular}{cc}
$F$&$\tilde{I_2}$\\ 
\end{tabular}\right],$ where
$\tilde{I_2}=\left[\begin{tabular}{cc}
1&1\\
0&1\\
\end{tabular}\right]$ be the generator matrix of the residue code of $\widetilde{C_f}$.
 The number of Type II $\mathbb{Z}_4$-codes of type $4^22^{2^{m}-2^2}$ with the same residue code $\widetilde{C_f}^{(1)}$ is $2^2$ (see  \cite{Gaborit}, \cite{PlessLeonFields}). The generator matrices of those four codes could be given in the form (see \cite[Theorem 3]{PlessLeonFields})
 $$\left[\begin{tabular}{cc}
$F$&$\tilde{I_2}+2B$\\ 
$2H$&$O$\\ 
\end{tabular}\right],$$
where the possibilities for $B$ are
$\left[\begin{tabular}{cc}
0&1\\
1&0\\
\end{tabular}\right],$
$\left[\begin{tabular}{cc}
0&1\\
1&1\\
\end{tabular}\right],$
$\left[\begin{tabular}{cc}
1&1\\
1&0\\
\end{tabular}\right]$ and 
$\left[\begin{tabular}{cc}
1&1\\
1&1\\
\end{tabular}\right],$
if $m=3.$ 
If $m\geq 5,$ then the possibilities for $B$ are
$\left[\begin{tabular}{cc}
0&0\\
0&0\\
\end{tabular}\right],$
$\left[\begin{tabular}{cc}
0&0\\
0&1\\
\end{tabular}\right],$
$\left[\begin{tabular}{cc}
1&0\\
0&0\\
\end{tabular}\right]$ and 
$\left[\begin{tabular}{cc}
1&0\\
0&1\\
\end{tabular}\right].$
Therefore, those four codes are equivalent for every $m\geq3$.
 
\end{proof}

\begin{exm}\label{ex2}
The construction described in Theorem  \ref{prosC_f}, when applied on the codes $C_{f_{2^3\_1}}$ and $C_{f_{2^3\_2}}$ from Example \ref{ex1}, 
yields to a code equivalent to {\rsfs K}$_8$$'$, a unique Type II $\mathbb{Z}_4$-code of length $8$ and type $4^22^4$, whose permutation automorphism group has size $1152$ (see \cite{ConS}, \cite{clasZ4}). 
According to \cite{ConS}, it was introduced by Klemm in \cite{Klemm}. Let $W_i^E$ and $W_i^L$ denote the number of codewords of Euclidean weight $i$ and Lee weight $i$ in a $\mathbb{Z}_4$-code, respectively.
The code {\rsfs K}$_8$$'$ has Euclidean weight distribution
$$(W_0^E,W_8^E,W_{16}^E,W_{24}^E,W_{32}^E)=(1,140,102,12,1).$$
Its Lee weight distribution is $$(W_0^L,W_4^L,W_6^L,W_8^L,W_{10}^L,W_{12}^L,W_{16}^L)=(1,12,64,102,64,12,1).$$
\end{exm}

In the sequel we will consider weight distributions for codes $\widetilde{C_f}$ of length $2^m$ for odd $m, m\geq 3.$
By $A_i$ we denote the number of codewords of weight $i$ in a binary code.
It follows from the MacWilliams identity (see, e.g., \cite{FEC}, p. 252.) that the weight distribution $(A_0,\dots, A_n)$ of a binary linear $[n,k]$ code and the weight distribution $(A_0',\dots, A_n')$ of its dual code are connected by the equations
\begin{equation}\label{eq}
A_j'=\frac{1}{2^k}\sum_{i=0}^n A_i\sum_{l=0}^j (-1)^l {i \choose l}{n-i \choose j-l},\ j=0,\dots, n.
\end{equation}

\begin{lem} \label{torsion}
Let $\widetilde{C_f}$ be a Type II $\mathbb{Z}_4$-code of length $2^{m}$ for odd $m, m\geq 3,$ constructed as in Theorem \ref{prosC_f}. 
%Let $n=2^{m}.$ 
Then the weight distribution  of its torsion code $\widetilde{C_f}^{(2)}$ is $(A_0',\dots, A_{2^m}')$, where
$$A_j'=\frac{1}{2}\left({2^m \choose j}+\sum_{l=0}^j (-1)^l {2^{m-1} \choose l}{2^{m-1} \choose j-l}\right)$$ for even $j$ and $A_j'=0$ for odd $j$, $j=0,\dots, 2^m.$
\end{lem}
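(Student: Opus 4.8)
The plan is to reduce the computation to the binary residue code $\widetilde{C_f}^{(1)}$ and then apply the MacWilliams identity \eqref{eq}. First I would note that every row of $\widetilde{D}$ is twice a binary vector, so reduction modulo $2$ annihilates the $\widetilde{D}$-part of every codeword; hence $\widetilde{C_f}^{(1)}=C_f^{(1)}$, a binary code of dimension $2$ by Theorem \ref{C_f}. Since $C_f$ is the cyclic code generated by $c_f$, reducing its circulant generator matrix modulo $2$ gives the binary circulant matrix with first row $c_f\bmod 2$; and by the identities $f(x,0)=2a(x)$, $f(x,1)=2b(x)+1$ from the proof of Theorem \ref{C_f}, even and odd coordinates alternate in $c_f$, so $c_f\bmod 2$ is the alternating vector $(0,1,0,1,\dots,0,1)\in\F_2^{2^m}$. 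Its cyclic shifts take only the two values $(0,1,\dots,0,1)$ and $(1,0,\dots,1,0)$, whose sum is $\1$, so
$$\widetilde{C_f}^{(1)}=\{(0,\dots,0),\ (0,1,\dots,0,1),\ (1,0,\dots,1,0),\ \1\}$$
is a $[2^m,2]$ binary code with weight distribution $A_0=1$, $A_{2^{m-1}}=2$, $A_{2^m}=1$ and $A_i=0$ otherwise; in particular it is doubly even, as $4\mid 2^{m-1}$ for $m\geq 3$.

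Next, since $\widetilde{C_f}$ is self-dual by Theorem \ref{prosC_f}(i), the relation $\widetilde{C_f}^{(2)}=\bigl(\widetilde{C_f}^{(1)}\bigr)^\bot$ valid for self-dual $\mathbb{Z}_4$-codes (Section \ref{pre}) shows that $\widetilde{C_f}^{(2)}$ is a $[2^m,2^m-2]$ binary code whose weight distribution is obtained by feeding the distribution $(A_i)$ above, with $k=2$ and $n=2^m$, into \eqref{eq}. The summand $i=0$ contributes $\binom{2^m}{j}$, the summand $i=2^m$ contributes $(-1)^j\binom{2^m}{j}$, and the summand $i=2^{m-1}$, with its multiplicity $A_{2^{m-1}}=2$, contributes $2\sum_{l=0}^{j}(-1)^l\binom{2^{m-1}}{l}\binom{2^{m-1}}{j-l}$, that is, twice the coefficient of $x^j$ in $(1-x)^{2^{m-1}}(1+x)^{2^{m-1}}=(1-x^2)^{2^{m-1}}$.

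Finally I would assemble the pieces. For odd $j$ the first two contributions cancel and the third vanishes, because $(1-x^2)^{2^{m-1}}$ has only even-degree terms (equivalently, $\widetilde{C_f}^{(2)}$ is an even binary code by Theorem \ref{Ctipa II}); hence $A_j'=0$. For even $j$ the first two contributions add up to $2\binom{2^m}{j}$, and dividing the total by $2^k=4$ gives exactly
$$A_j'=\frac{1}{2}\left(\binom{2^m}{j}+\sum_{l=0}^{j}(-1)^l\binom{2^{m-1}}{l}\binom{2^{m-1}}{j-l}\right).$$
There is no serious obstacle: the points that need care are the identification $\widetilde{C_f}^{(1)}=C_f^{(1)}$ together with its explicit four-word description, and rewriting the Vandermonde-type convolution as the coefficient of $(1-x^2)^{2^{m-1}}$, which is what makes the vanishing at odd $j$ transparent.
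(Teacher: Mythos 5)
Your proposal is correct and follows essentially the same route as the paper: identify the residue code $\widetilde{C_f}^{(1)}$ as the two-dimensional code with weight distribution $A_0=1$, $A_{2^{m-1}}=2$, $A_{2^m}=1$, use self-duality to write $\widetilde{C_f}^{(2)}=\bigl(\widetilde{C_f}^{(1)}\bigr)^\bot$, and apply the MacWilliams identity \eqref{eq}. You merely carry out the substitution into \eqref{eq} more explicitly (including the $(1-x^2)^{2^{m-1}}$ observation for the vanishing at odd $j$, which the paper instead gets from Theorem \ref{Ctipa II}), but the argument is the same.
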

\begin{proof}
The $\mathbb{Z}_4$-code $\widetilde{C_f}$ constructed as in Theorem \ref{prosC_f} is a self-dual $\mathbb{Z}_4$-code. 
Therefore, the torsion code $\widetilde{C_f}^{(2)}$ is the dual code of $\widetilde{C_f}^{(1)}.$
Further, $\widetilde{C_f}$ is a Type II $\mathbb{Z}_4$-code. So, $\widetilde{C_f}^{(2)}$ is an even binary code.
By the construction, the residue code $\widetilde{C_f}^{(1)}$ contains codewords of weights $0, 2^{m-1}$ and $2^m$ with $A_0=1, A_{2^{m-1}}=2, A_{2^m}=1.$ 
The statement of the lemma follows from the expression (\ref{eq}). 
\end{proof}

\begin{thm} \label{weights}
 Let $\widetilde{C_f}$ be a Type II $\mathbb{Z}_4$-code of length $2^{m}$ for odd $m, m\geq 3,$ constructed as in Theorem \ref{prosC_f},
 and let $(A_0',\dots, A_{2^m}')$ be the weight distribution of its torsion code  $\widetilde{C_f}^{(2)}$.
 Then:
 \begin{enumerate}
\item[(i)] $\widetilde{C_f}$ has Euclidean weight distribution $(W_0^E,\dots,W_{2^{m+2}}^E)$ with $W_i^E=0$ for $i \not\equiv 0\ (mod\ 8)$ 
and, for $i$ divisible by $8,$ it holds $$W_i^E=A_{\frac{i}{4}}'+s_i+t_i,$$ 
\item[(ii)] the symmetrized weight enumerator of the code $\widetilde{C_f}$ is
 $$swe_{\widetilde{C_f}}(a,b,c)=s_{2^m}b^{2^m}+\sum_{i=0}^{2^m}\left(A_i'a^{2^m-i}c^{i}+t_{4i}a^{5\cdot 2^{m-3}-i}b^{2^{m-1}}c^{i-2^{m-3}}\right),$$
\item[(iii)]  if $m\geq 5,$ then $\widetilde{C_f}$ has Lee weight distribution $(W_0^L,\dots,W_{2^{m+1}}^L)$ with $W_i^L=0$ for $i \not\equiv 0\   (mod\ 4)$
and, for $i$ divisible by $4,$ it holds $$W_i^L=A_{\frac{i}{2}}'+s_i+u_i,$$
\end{enumerate}

where 
$$A_j'=\frac{1}{2}\left({2^m \choose j}+\sum_{l=0}^j (-1)^l {2^{m-1} \choose l}{2^{m-1} \choose j-l}\right)$$ for even $j$ and $A_j'=0$ for odd $j$, $j=0,\dots, 2^m,$ 
and

\begin{align*}
 s_i &=
  \begin{cases}
   2^{2^{m}-2},      & \text{if } i=2^m \\
   0,       & \text{otherwise}
  \end{cases},
  \\
 t_i &=
  \begin{cases}
   2^{2^{m-1}}{2^{m-1}\choose (2i-2^m)/8}, & \text{if } 2^{m-1}\leq i\leq 5\cdot2^{m-1} \\
   0,        & \text{otherwise}
  \end{cases},\\
 u_i &=
  \begin{cases}
   2^{2^{m-1}}{2^{m-1}\choose (2i-2^m)/4}, & \text{if } 2^{m-1}\leq i\leq 3\cdot2^{m-1} \\
   0,        & \text{otherwise}
  \end{cases}.
\end{align*}
\end{thm}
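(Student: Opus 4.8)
The plan is to pin down the structure of $\widetilde{C_f}$ by partitioning its codewords according to their residue modulo $2$, exploiting the fact that the residue code $\widetilde{C_f}^{(1)}$ has dimension $2$ with weight enumerator $A_0=1$, $A_{2^{m-1}}=2$, $A_{2^m}=1$ (Theorem \ref{C_f} and Lemma \ref{torsion}). Every codeword $c\in\widetilde{C_f}$ reduces mod $2$ to one of four residue classes: the zero vector, the all-ones vector, and the two vectors of weight $2^{m-1}$ in which even and odd positions of $c_f$ alternate. The coset of $\widetilde{C_f}^{(2)}=2\widetilde{C_f}^{(2)}$ (as a subgroup of $\widetilde{C_f}$) lying over residue $0$ consists exactly of the $2$-part of the code, i.e.\ all codewords with entries in $\{0,2\}$; these contribute the term $\sum A_i' a^{2^m-i}c^i$ to $swe$, with Euclidean weight $4i$ on a codeword with $i$ coordinates equal to $2$ and Lee weight $2i$. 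This handles the ``$A'$'' summands in all three parts.

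Next I would analyze the coset lying over the all-ones residue. A codeword there has all coordinates odd, hence $n_0=n_2=0$, $n_1+n_3=2^m$, so its Euclidean weight and its Lee weight both equal $2^m$ (since every odd square is $1$ mod $8$, in fact $wt_E=2^m$ exactly, and $wt_L=2^m$). Counting these: the number of codewords of $\widetilde{C_f}$ over a fixed nonzero residue equals $|\widetilde{C_f}^{(2)}|=2^{2^m-2}$, which is exactly the value $s_{2^m}$. That explains the $s_i$ term and the $s_{2^m}b^{2^m}$ monomial. The remaining two residue classes (weight $2^{m-1}$, alternating) are where the $t_i$ and $u_i$ terms come from: a codeword $c$ over such a residue has exactly $2^{m-1}$ odd coordinates (on the support of the residue vector) and $2^{m-1}$ coordinates in $\{0,2\}$ (off it), so $n_1(c)+n_3(c)=2^{m-1}$, $n_2(c)=j-2^{m-3}$ say where $j$ counts something convenient, and $n_0(c)$ is determined. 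The key computation is: over each of these two residue classes there are $2^{2^m-2}$ codewords; I must show that among them the number with exactly $k$ coordinates equal to $2$ (on the $2^{m-1}$ even positions) is $2^{2^{m-1}}\binom{2^{m-1}}{k}$, doubled for the two classes — which, after converting $k$ to Euclidean weight via $wt_E=2^{m-1}+4k$ (the odd coordinates contribute $2^{m-1}$, each $2$ contributes $4$), gives precisely the $t_i$ formula with index shift $k=(2i-2^m)/8$, and similarly $wt_L=2^{m-1}+2k$ gives $u_i$ with $k=(2i-2^m)/4$. The factor $2^{2^{m-1}}$ arises because the odd coordinates and the $\{0,2\}$ coordinates can be chosen with full freedom within the constraints imposed by self-duality: fixing the residue fixes $2^{m-1}$ positions as ``odd'' but each can independently be $1$ or $3$ ($2^{2^{m-1}}$ choices), while the even-position pattern ranges over a coset of a code of the appropriate dimension, contributing the binomial.

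To make the counting rigorous I would use the standard form generator matrix $\widetilde G$ from Theorem \ref{prosC_f}: rows $g_1,g_2$ (the two ``$4$-rows'') and the $2$-rows. A general codeword is $\alpha_1 g_1+\alpha_2 g_2 + (\text{$\mathbb{Z}_2$-combination of $2$-rows})$ with $\alpha_1,\alpha_2\in\mathbb{Z}_4$. Its residue is $(\alpha_1\bmod 2)\bar g_1+(\alpha_2\bmod 2)\bar g_2$, so the four residue classes correspond to the four parities of $(\alpha_1,\alpha_2)$; within each class, the $2$-part is a full coset of $2\widetilde{C_f}^{(2)}$, of size $2^{k_2+k_3}=2^{2^m-2^2}\cdot 4 /4 = 2^{2^m-2}$ — wait, more carefully $\dim\widetilde{C_f}^{(2)}=2^m-2$ so there are $2^{2^m-2}$ codewords in each residue coset, confirming $s_{2^m}$. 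I would then exhibit $g_1,g_2$ explicitly enough to see that on the $2^{m-1}$ positions where the residue vector is $1$, the free choice of $\alpha_i\in\{1,3\}$ versus $\{0,2\}$ plus the $2D$ block gives each of the $2^{2^{m-1}}$ sign patterns equally often, and on the complementary $2^{m-1}$ positions the values in $\{0,2\}$ form a coset of $\widetilde{C_f}^{(2)}$ restricted there, whose weight enumerator is $(1+c)^{2^{m-1}}$ up to normalization, yielding the binomial $\binom{2^{m-1}}{k}$. Finally, $swe_{\widetilde{C_f}}$ is the sum of the four residue-class contributions; parts (i) and (iii) then follow by reading off the coefficient of $a^{2^m-i}b^{\,n_1+n_3}c^i$ and using $wt_E = (n_1+n_3)+4n_2$ and $wt_L=(n_1+n_3)+2n_2$, noting that for $m\ge 5$ the three possible values $0,2^{m-1},2^m$ of $n_1+n_3$ are distinct and the Lee-weight bookkeeping does not collide (this is exactly where the $m\ge 5$ hypothesis in (iii) is used, cf.\ the $m=3$ anomaly in Example \ref{ex2}), while for (i) Theorem \ref{prosC_f}(i) already guarantees $8\mid wt_E$.

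The main obstacle I anticipate is the precise combinatorial bookkeeping in the two ``alternating'' residue classes: showing that the joint distribution of $(n_1+n_3,\,n_2)$ over those $2^{2^m-2}$ codewords factors as claimed, i.e.\ that the sign choices on the odd positions are genuinely independent and uniform ($2^{2^{m-1}}$ with multiplicity $1$ each) and that the even-position values form a coset whose $c$-enumerator is binomial. This requires a careful look at the interaction between the $A$ and $B_1+2B_2$ blocks of $g_1,g_2$ and the $2D$ block of the $2$-rows, and confirming the index shifts $(2i-2^m)/8$ and $(2i-2^m)/4$ match the weight formulas. Everything else — the $A'$ term, the $s_{2^m}$ term, and assembling $swe$ — is essentially a coset decomposition plus the already-established Lemma \ref{torsion}.
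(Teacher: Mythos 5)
Your proposal follows essentially the same route as the paper's proof: both partition the code by the parity of the coefficients of the two order-$4$ generators (equivalently, by residue class), read off the $A'$-, $s$- and $t/u$-contributions from the three resulting types of codewords, and reduce everything to counting, in the two mixed classes, the codewords with a prescribed number of $2$'s. The one detail to tighten is your heuristic for the factor $2^{2^{m-1}}$: the odd positions do not admit $2^{2^{m-1}}$ independent sign choices; rather, since $\widetilde{C_f}^{(1)}$ is spanned by the indicator vectors of the two halves, $\widetilde{C_f}^{(2)}$ is the direct sum of two even-weight codes of length $2^{m-1}$, which yields $2^{2^{m-1}-1}\binom{2^{m-1}}{k}$ codewords per mixed class with $n_2=k$ (for even $k$, zero otherwise), hence $2^{2^{m-1}}\binom{2^{m-1}}{k}$ in total --- exactly the count the paper asserts.
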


\begin{proof}  The expression for  $(A_0',\dots, A_{2^m}')$ is determined by Lemma \ref{torsion}.\\
By Theorem \ref{prosC_f},  $\widetilde{C_f}$ is a Type II $\mathbb{Z}_4$-code, i.e., all Euclidean weights are divisible by 8.
If $m\geq 5,$ then all Lee weights in $\widetilde{C_f}$ are divisible by four.\\
Let $\widetilde{G}_s$ be the generator matrix of $\widetilde{C_f}$ in standard form. Denote by $r_i$ the $i$-th row of $\widetilde{G}_s,$ $i=1,\dots,2^{m}-2.$
Note that the matrix $B_1+2B_2$ in (\ref{std}) is of order $2.$
Further, for $m=3$, each of the rows $r_1$ and $r_2$ contains the number $2$ exactly once, and for $m\geq 5,$ there are no $2$'s in $r_1$ and $r_2.$

Let $c\in \widetilde{C_f}.$ Then $n_1(c)+n_3(c)\in \{0,2^{m-1},2^m\}$ and
$$c=a_1r_1+a_2r_2+\sum_{i=3}^{2^{m}-2} a_ir_i,$$ where $a_1,a_2\in \mathbb{Z}_4$ and $a_i\in \mathbb{Z}_2$ for $i=3,\dots ,2^{m}-2.$ 
$\widetilde{C_f}$ contains exactly $2^{2^{m}-2}$ codewords $c$ with $n_1(c)+n_3(c)=0.$ These are the codewords with even $a_1$ and $a_2.$ 
Furthermore, $\widetilde{C_f}$ contains exactly $2^{2^{m}-2}$ codewords $c$ with $n_1(c)+n_3(c)=2^m.$ These are the codewords with odd $a_1$ and $a_2.$ 
Finally, $\widetilde{C_f}$ contains exactly $2^{2^m-1}$ codewords $c$ with $n_1(c)+n_3(c)=2^{m-1}.$ These are the codewords where one of the elements in $\{a_1, a_2\}$ is odd and the other is even.

The codewords $c$ with $n_1(c)+n_3(c)=2^m$ have Euclidean and Lee weight equal to $2^m.$
 
Let $c\in \widetilde{C_f}$ be a codeword with $n_1(c)+n_3(c)=2^{m-1}.$ If $m=3,$ half of these codewords have $n_2(c)=1$ and half of them have $n_2(c)=3.$ 
If $m\geq 5,$ then $n_2(c)$ is an even number, $n_2(c)\in \{0,\dots, 2^{m-1}\},$ 
and there are exactly $$2\cdot 2^{2^{m-1}-1}{2^{m-1} \choose n_2(c) }$$ codewords with Euclidean weight $2^{m-1}+4n_2(c)$ and Lee weight $2^{m-1}+2n_2(c)$, for each of these numbers $n_2(c).$ 
If $a_1$ is odd, then the even codeword $a_2r_2+\sum_{i=3}^{2^{m}-2} a_ir_i$ has $2$'s on exactly $n_2(c)$ even coordinate positions and the remaining even number of $2$'s are 
on odd coordinate positions. If $a_2$ is odd, then the even codeword $a_1r_1+\sum_{i=3}^{2^{m}-2} a_ir_i$ has $2$'s on exactly  $n_2(c)$ odd coordinate positions and the remaining even number of $2$'s are on even coordinate positions.

From these observations the weight distributions in (i) and (iii) are obtained.\\
For the coefficients of the symmetrized weight enumerator of $\widetilde{C_f},$ we count the codewords $c$ with $n_1(c)+n_3(c)=2^m,$ the even codewords and the codewords $c$ with $n_1(c)+n_3(c)=2^{m-1}$ in $\widetilde{C_f}.$

\end{proof}

\subsection{Binary Type II self-dual codes}

The {\it Gray map} $\phi : \mathbb{Z}_4^n\rightarrow \mathbb{F}_2^{2n}$ is the componentwise extension of the map
$\psi : \mathbb{Z}_4\rightarrow \mathbb{F}_2^{2}$ defined by
$\psi (0)=(0,0),\  \psi (1)=(0,1),\  \psi (2)=(1,1),\  \psi (3)=(1,0).$ 
Note that a $\mathbb{Z}_4$-code $C$ and the corresponding binary code $\phi (C)$ have the same size and that the Lee weight of a codeword $x\in \mathbb{Z}_4^n$ is equal to the (Hamming) weight of its Gray image $\phi(x)$.

If $C$ is a $\mathbb{Z}_4$-code of length $n,$ its Gray image $\phi (C)$ is a binary code of length $2n$, which is in general nonlinear. 
However, the following theorem holds (see \cite[Theorem 8]{latt}).

\begin{thm} \label{Lee4}
If $C$ is a self-dual $\mathbb{Z}_4$-code with all Lee weights divisible by 4, then the binary image of $C$ under the Gray map is linear.
\end{thm}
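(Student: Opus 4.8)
The plan is to show that $\phi(C)$ is closed under addition over $\mathbb{F}_2$; since $\phi(C)$ contains $0=\phi(0)$, closure under addition makes it an $\mathbb{F}_2$-subspace, which is exactly linearity. The one computational fact I would isolate at the start is the pointwise \emph{correction formula} for the Gray encoding: for all $a,b\in\mathbb{Z}_4$,
\[
\psi(a)+\psi(b)=\psi\bigl(a+b+2\,\overline{a}\,\overline{b}\bigr),\qquad \overline{a}:=a\ (\text{mod}\ 2)\in\{0,1\},
\]
which one verifies directly on the sixteen pairs (it records that in $\mathbb{F}_2^2$ the Gray encoding of a sum differs from the ordinary binary sum exactly by the carry bit $\overline{a}\,\overline{b}$). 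Since $\phi$ is the coordinatewise extension of $\psi$ — and since whether $\phi$ interleaves or concatenates the two ``bits'' is an irrelevant fixed permutation of the $2n$ coordinates, affecting neither linearity nor the formula — we get
\[
\phi(x)+\phi(y)=\phi\bigl(x+y+2(\overline{x}* \overline{y})\bigr)\qquad(x,y\in\mathbb{Z}_4^n),
\]
where $\overline{x}=x\ (\text{mod}\ 2)$ and $*$ is the componentwise product.

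First I would use this to reduce the theorem to a statement about supports. Because $C$ is a $\mathbb{Z}_4$-module, $x+y\in C$ whenever $x,y\in C$, so $\phi(x)+\phi(y)=\phi\bigl(x+y+2(\overline{x}*\overline{y})\bigr)$ already lies in $\phi(C)$ provided $2(\overline{x}*\overline{y})\in C$; hence it suffices to prove $2(\overline{x}*\overline{y})\in C$ for all $x,y\in C$. Write $S(x)=\{i:\ x_i\ \text{is odd}\}$. Using self-duality $C=C^{\perp}$, the membership $2(\overline{x}*\overline{y})\in C$ is equivalent to $\langle\,2(\overline{x}*\overline{y}),z\,\rangle\equiv 0\ (\text{mod}\ 4)$ for every $z\in C$; and $\langle\,2(\overline{x}*\overline{y}),z\,\rangle=2\sum_{i\in S(x)\cap S(y)}z_i$, which is $\equiv 0\ (\text{mod}\ 4)$ precisely when $|S(x)\cap S(y)\cap S(z)|$ is even (only the odd $z_i$ affect the parity of the sum). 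Thus the theorem reduces to the claim
\[
(\star)\qquad |S(x)\cap S(y)\cap S(z)|\ \text{is even for all}\ x,y,z\in C.
\]

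The heart of the argument — and the step I expect to be the main obstacle — is deriving $(\star)$ from the hypothesis that $wt_L(c)$ is divisible by $4$ for all $c\in C$. Consider the third mixed difference
\[
\delta(x,y,z):=wt_L(x+y+z)-wt_L(x+y)-wt_L(x+z)-wt_L(y+z)+wt_L(x)+wt_L(y)+wt_L(z).
\]
All seven arguments lie in $C$, so every term is divisible by $4$ and $\delta(x,y,z)\equiv 0\ (\text{mod}\ 4)$. On the other hand $wt_L$ is the sum over coordinates of the local Lee weight $\lambda\colon\mathbb{Z}_4\to\mathbb{Z}$ with $\lambda=(0,1,2,1)$ on $(0,1,2,3)$, so $\delta(x,y,z)=\sum_i\Delta(x_i,y_i,z_i)$, where $\Delta(a,b,c)$ is the same alternating sum of the values of $\lambda$ at $a+b+c,\,a+b,\,a+c,\,b+c,\,a,\,b,\,c$ (all sums taken in $\mathbb{Z}_4$). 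A short finite check shows $\Delta(a,b,c)=0$ as soon as one of $a,b,c$ is $0$, and — using that $\lambda$ is invariant under $1\leftrightarrow 3$ to cut down the remaining possibilities — that $\Delta(a,b,c)\equiv 2\ (\text{mod}\ 4)$ when $a,b,c$ are all odd while $\Delta(a,b,c)\equiv 0\ (\text{mod}\ 4)$ in every other case. Hence $\delta(x,y,z)\equiv 2\,|S(x)\cap S(y)\cap S(z)|\ (\text{mod}\ 4)$, and comparison with $\delta(x,y,z)\equiv 0\ (\text{mod}\ 4)$ yields $(\star)$.

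This establishes that $\phi(C)$ is linear. As a byproduct, $\phi(C)$ is then a binary $[2n,n]$ code (it has $|C|=2^n$ codewords) all of whose weights equal Lee weights of $C$, hence are divisible by $4$; being doubly even of dimension $n$ it is self-dual, i.e. a Type II binary code. The only genuinely delicate point is the tabulation behind the ``$\Delta\equiv 2\pmod 4$ iff all three arguments are odd'' dichotomy; everything else is bookkeeping, provided one fixes once and for all that passing from the interleaved Gray map to the concatenated one is a harmless coordinate permutation.
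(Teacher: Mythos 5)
Your proof is correct. Note, however, that the paper does not prove this statement at all: it is quoted verbatim from the literature (Calderbank--Sloane, \cite[Theorem 8]{latt}), so there is no in-paper argument to compare against; your write-up is a genuine, self-contained proof of a cited result.

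On its merits: the skeleton is the standard one. The correction formula $\psi(a)+\psi(b)=\psi(a+b+2\bar a\bar b)$ is the classical Hammons--Kumar--Calderbank--Sloane--Sol\'e identity (and is the mechanism behind the paper's Lemma 3.12, which characterizes linearity of $\phi(C)$ by $2g_ig_j\in C$); your reduction via self-duality to the parity condition $(\star)$ on triple intersections of supports is exactly right, including the point that $C^\perp\subseteq C$ is what turns the orthogonality test into a membership criterion. The genuinely nice part is the third-mixed-difference argument: $\delta(x,y,z)\equiv 0\pmod 4$ because all seven arguments lie in $C$, while the coordinatewise tabulation gives $\delta\equiv 2\,|S(x)\cap S(y)\cap S(z)|\pmod 4$; I checked the table ($\Delta=\pm2$ on all-odd triples, $\Delta\in\{0,4,8\}$ otherwise, and $\Delta=0$ when some argument is $0$), and it is as you claim. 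This packages what Calderbank--Sloane extract from an identity relating $wt_L(x+y)$ to $wt_L(x)$, $wt_L(y)$ and the binary images into one clean polarization computation. Your closing remark that $\phi(C)$ is then a doubly even self-dual $[2n,n]$ binary code is also correct (doubly even implies self-orthogonal, and dimension $n$ in length $2n$ forces self-duality), and is consistent with the paper's Theorem 3.11.
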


Moreover, according to \cite[Proposition 2.6]{typeIV}, the following statement holds.
 
\begin{thm} \label{LeeIV}
If $C$ is a Type IV $\mathbb{Z}_4$-code then all  the  Lee  weights  of $C$ are  divisible  by  four  and  
its Gray image is a self-dual doubly even binary code.
\end{thm}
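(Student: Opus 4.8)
The plan is to fix an arbitrary codeword $c\in C$, control the four coordinate counts $n_0(c),n_1(c),n_2(c),n_3(c)$, and play the self-duality of $C$ (which governs the square sum $\langle c,c\rangle$) against the Type~IV hypothesis (which governs $wt_H(c)$) in order to pin down $wt_L(c)$ modulo $4$. Once the divisibility of all Lee weights by $4$ is established, the assertions about the Gray image follow almost formally: linearity of $\phi(C)$ from Theorem~\ref{Lee4}, the doubly-even property because the Gray map converts Lee weight into binary Hamming weight, and self-duality from a cardinality count together with the standard fact that a doubly-even binary code is automatically self-orthogonal.

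For the Lee-weight part I would argue as follows. Since $C$ is self-dual it is self-orthogonal, so $\langle c,c\rangle=0$ in $\mathbb{Z}_4$. In $\mathbb{Z}_4$ one has $0^2=2^2=0$ and $1^2=3^2=1$, hence $\langle c,c\rangle\equiv n_1(c)+n_3(c)\ (\text{mod}\ 4)$, and therefore $n_1(c)+n_3(c)\equiv 0\ (\text{mod}\ 4)$. Because $C$ is Type~IV, $wt_H(c)=n_1(c)+n_2(c)+n_3(c)$ is even, so $n_2(c)=wt_H(c)-\bigl(n_1(c)+n_3(c)\bigr)$ is even as well. Consequently
\[
wt_L(c)=n_1(c)+2n_2(c)+n_3(c)=\bigl(n_1(c)+n_3(c)\bigr)+2n_2(c)\equiv 0+0\equiv 0\ (\text{mod}\ 4).
\]
Extracting $n_1(c)+n_3(c)\equiv 0\ (\text{mod}\ 4)$ from self-orthogonality and the evenness of $n_2(c)$ from the Type~IV condition, and observing that together they annihilate $wt_L(c)$ modulo $4$, is essentially the entire content of this half, and is where I expect the (modest) crux of the argument to lie.

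For the Gray-image part: the Gray map satisfies $wt_H(\phi(x))=wt_L(x)$ for all $x\in\mathbb{Z}_4^n$, so every codeword of $\phi(C)$ has Hamming weight divisible by $4$. Since $C$ is self-dual with all Lee weights divisible by $4$, Theorem~\ref{Lee4} shows $\phi(C)$ is a linear binary code, and hence it is doubly even. A doubly-even binary code is self-orthogonal: for $x,y\in\phi(C)$ one has $wt_H(x+y)=wt_H(x)+wt_H(y)-2\,|\mathrm{supp}(x)\cap\mathrm{supp}(y)|$, and since all three weights are $\equiv 0\ (\text{mod}\ 4)$ the intersection size is even, i.e. $\langle x,y\rangle=0$. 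Finally $\phi$ is a bijection, so $|\phi(C)|=|C|=2^n$ because a self-dual $\mathbb{Z}_4$-code of length $n$ has exactly $2^n$ codewords; as $\phi(C)$ has length $2n$, its dimension is $n=\tfrac{2n}{2}$. A self-orthogonal binary code of length $2n$ and dimension $n$ is self-dual, so $\phi(C)$ is a self-dual doubly even binary code, which completes the plan.
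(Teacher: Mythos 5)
Your proposal is correct in every step. Note, however, that the paper does not prove this statement at all: it is quoted verbatim as a known result, namely \cite[Proposition 2.6]{typeIV}, so there is no in-paper argument to compare against. What you have supplied is a self-contained elementary proof, and it holds up. The key computation $\langle c,c\rangle\equiv n_1(c)+n_3(c)\ (\text{mod}\ 4)$ (using $0^2\equiv 2^2\equiv 0$ and $1^2\equiv 3^2\equiv 1$ in $\mathbb{Z}_4$) combined with self-orthogonality gives $4\mid n_1(c)+n_3(c)$, the Type~IV condition then forces $n_2(c)$ to be even, and these two facts together kill $wt_L(c)$ modulo $4$ exactly as you say. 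For the Gray image, invoking Theorem~\ref{Lee4} for linearity, the weight-preserving property of $\phi$ for double evenness, the standard inclusion--exclusion identity for self-orthogonality of a doubly even linear code, and the count $|C|=2^n$ for self-duality is precisely the route one would expect; each of these ingredients is either stated in the paper's preliminaries or is standard. The only thing your write-up buys beyond the paper is transparency: the paper leans on the cited proposition as a black box, whereas your argument makes visible that the divisibility of the Lee weights is an interaction between self-orthogonality (controlling $n_1+n_3$ mod $4$) and the even-Hamming-weight hypothesis (controlling the parity of $n_2$), which is indeed the crux.
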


If $x,y\in \mathbb{Z}_4^n,$ we define $xy$ as the componentwise product $(x_1y_1,\dots, x_ny_n).$ According to \cite{gray9}, the following statement holds.

\begin{lem}\label{lem1}
 
Let $C$ be a $\mathbb{Z}_4$-code of type $4^{k_1}2^{k_2}.$ Let $G$ be a generator matrix of $C$ in standard form and let $g_i$, $i \in \{1,2,...,k_1+k_2\},$ be its $i$-th row. The binary code $\phi(C)$ is linear if and only if for all $i,j\in \{1,\dots,k_1\}$ we have $2g_ig_j\in C.$
 
\end{lem}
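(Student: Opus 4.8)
The plan is to reduce the statement to one algebraic identity for the Gray map and then to bookkeeping with the generators. First I would establish that for all $x,y\in\mathbb{Z}_4^n$,
\[ \phi(x+y)=\phi(x)+\phi(y)+\phi(2xy), \]
where the additions on the right are in $\mathbb{F}_2^{2n}$ and $2xy$ denotes the $\mathbb{Z}_4$-vector $2(x_1y_1,\dots,x_ny_n)$. Since $\phi$ and the componentwise product act coordinate by coordinate, it suffices to check $\psi(a+b)=\psi(a)+\psi(b)+\psi(2ab)$ for $a,b\in\mathbb{Z}_4$; this follows from a short case analysis --- the correction term equals $\psi(2)=(1,1)$ when $a$ and $b$ are both odd and equals $\psi(0)=(0,0)$ otherwise --- or, more conceptually, by writing each entry as $a=\alpha+2\beta$ with $\alpha,\beta\in\{0,1\}$, noting that $\psi(a)=(\beta,\alpha+\beta)$, and tracking the carry $\alpha$-bit produced in $a+b$.

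From this identity I would derive, by applying it a second time to the pair $x+y$ and $2xy$ and using $2(x+y)(2xy)=4(x+y)xy\equiv 0\pmod 4$, the relation
\[ \phi(x)+\phi(y)=\phi(x+y+2xy)\qquad\text{for all }x,y\in\mathbb{Z}_4^n. \]
Since $\phi$ is injective and $0\in\phi(C)$, the binary code $\phi(C)$ is linear precisely when it is closed under addition, and by the displayed relation this happens precisely when $x+y+2xy\in C$ for all $x,y\in C$. Because $x+y\in C$ automatically, this is in turn equivalent to the condition
\[ 2xy\in C\qquad\text{for all }x,y\in C, \]
which I will call $(\ast)$.

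It remains to cut $(\ast)$ down to pairs of generators. The map $(x,y)\mapsto 2xy$ is $\mathbb{Z}_4$-bilinear, so writing $x=\sum_i a_ig_i$ and $y=\sum_j b_jg_j$ with $a_i,b_j\in\mathbb{Z}_4$ yields $2xy=\sum_{i,j}a_ib_j\,(2g_ig_j)$. For $i>k_1$ the $i$-th row of a generator matrix in standard form has all its entries in $\{0,2\}$, so $2g_i$ is the zero vector and hence $2g_ig_j=2(g_ig_j)=(2g_i)g_j=0\in C$; the same holds whenever $j>k_1$. Therefore $2xy=\sum_{i,j\le k_1}a_ib_j\,(2g_ig_j)$. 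If $2g_ig_j\in C$ for all $i,j\in\{1,\dots,k_1\}$, then $2xy$ is a $\mathbb{Z}_4$-linear combination of elements of $C$, so $2xy\in C$ and $(\ast)$ holds; conversely $(\ast)$ applied to $x=g_i$, $y=g_j$ immediately gives $2g_ig_j\in C$ for those $i,j$. Chaining the three equivalences proves the lemma.

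The only substantive step is the addition identity for $\phi$; everything after it is formal. The point that needs a little care is the last reduction, where it is essential that the rows $g_i$ of order $2$ (those with $i>k_1$) satisfy $2g_i=0$ --- this is exactly what the standard form, equivalently the type $4^{k_1}2^{k_2}$ structure, guarantees, and it is the reason that only pairs $i,j\in\{1,\dots,k_1\}$ enter the statement.
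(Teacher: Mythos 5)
Your proof is correct and complete. Note that the paper itself gives no proof of this lemma --- it is quoted from the reference on $\mathbb{Z}_2\mathbb{Z}_4$-linear codes --- so there is nothing in the paper to compare against; your argument is essentially the standard one from that literature, built on the identity $\phi(x+y)=\phi(x)+\phi(y)+\phi(2xy)$ (equivalently $\phi(x)+\phi(y)=\phi(x+y+2xy)$), and all the steps check out: the coordinatewise case analysis for $\psi$, the reduction of linearity of $\phi(C)$ to the condition $2xy\in C$ for all $x,y\in C$, and the bilinear expansion over the generators together with the observation that $2g_i=0$ for the order-two rows, which is exactly why only $i,j\le k_1$ matter.
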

 
As a consequence of our previous observations we have the following corollaries.
 
\begin{cor}\label{fiC_f}
 Let $\widetilde{C_f}$ be a Type II $\mathbb{Z}_4$-code of length $2^{m}$ for odd $m, m\geq 3,$ constructed as in Theorem \ref{prosC_f}. 
Then:
\begin{itemize}
\item[(i)] The Gray image $\phi(\widetilde{C_f})$ is a self-dual binary code of length $2^{m+1}.$ If $m\geq 5,$ then $\phi(\widetilde{C_f})$ is doubly even. 
\item[(ii)] The Gray image $\phi({C_f})$ is a self-orthogonal linear binary code of length $2^{m+1}.$  If $m\geq 5,$ then $\phi(C_f)$ is doubly even. 
\end{itemize}
 \end{cor}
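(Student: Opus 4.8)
Both parts follow from the structural facts already proved, together with Theorem~\ref{LeeIV} and Lemma~\ref{lem1}. The plan is as follows. First I would note that $C_f\subseteq\widetilde{C_f}$, since $\widetilde G$ is obtained from a standard-form generator matrix $G$ of $C_f$ by adjoining the rows of $\widetilde D$; hence $\phi(C_f)\subseteq\phi(\widetilde{C_f})$, and part (ii) reduces to showing that $\phi(C_f)$ is itself linear. Next, for $m\geq 5$ the code $\widetilde{C_f}$ is of Type IV by Theorem~\ref{prosC_f}(ii), so Theorem~\ref{LeeIV} immediately gives that $\phi(\widetilde{C_f})$ is a self-dual doubly even binary code of length $2^{m+1}$; thus (i) is settled for $m\geq 5$, and what remains is the case $m=3$ together with the linearity of $\phi(C_f)$. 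I would handle the linearity in a way that also re-proves (i) uniformly for all odd $m\geq 3$.

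For linearity I would verify the criterion of Lemma~\ref{lem1}. In a standard-form generator matrix of $\widetilde{C_f}$ (type $4^{2}2^{2^{m}-4}$, so $k_1=2$) the only non-even rows are the first two, $g_1,g_2$, and their reductions $\overline{g_1},\overline{g_2}$ modulo $2$ generate the residue code $\widetilde{C_f}^{(1)}=C_f^{(1)}$. By the proof of Theorem~\ref{C_f} this is the two-dimensional cyclic code whose nonzero codewords are $u=(0,1,0,1,\dots)$, $v=(1,0,1,0,\dots)$ and $u+v=\mathbf 1$, with $\mathrm{wt}(u)=\mathrm{wt}(v)=2^{m-1}$. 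Since the two rows of $[I_{2}\,|\,A\,|\,B_1]$ have the form $(1,0,\dots)$ and $(0,1,\dots)$, neither can equal $\mathbf 1$, so $\{\overline{g_1},\overline{g_2}\}=\{u,v\}$; in particular $\overline{g_1}$ and $\overline{g_2}$ have disjoint supports. Therefore $2g_1g_2=\mathbf 0\in\widetilde{C_f}$, while $2g_ig_i=2\overline{g_i}$ lies in $\widetilde{C_f}$ because $\overline{g_i}\in\widetilde{C_f}^{(1)}\subseteq\widetilde{C_f}^{(2)}$ — the inclusion $C^{(1)}\subseteq C^{(2)}$ holding for every $\mathbb{Z}_4$-code, since $w\in C$ forces $2w\in C$. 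By Lemma~\ref{lem1}, $\phi(\widetilde{C_f})$ is linear; the identical computation applied to a standard-form generator matrix of $C_f$ (again of type $4^{2}2^{k_2}$, with the two non-even rows reducing to $u$ and $v$) shows that $\phi(C_f)$ is linear as well.

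Finally I would deduce self-duality by a dimension count plus a self-orthogonality check. A self-dual $\mathbb{Z}_4$-code of length $2^m$ has $2^{2^m}$ codewords, so $\phi(\widetilde{C_f})$ is a linear binary code of length $2^{m+1}$ and dimension $2^m$, i.e.\ of dimension half its length, and it suffices to show it is self-orthogonal. For $x,y\in\widetilde{C_f}$ one has $\langle x,y\rangle=0$ in $\mathbb{Z}_4$, and the standard relation between the $\mathbb{Z}_4$- and binary inner products of Gray images gives $\phi(x)\cdot\phi(y)\equiv\frac12\,|\mathrm{supp}(\overline{x})\cap\mathrm{supp}(\overline{y})|\pmod 2$ whenever $\langle x,y\rangle\equiv 0\pmod 4$. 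As $\overline{x},\overline{y}\in\widetilde{C_f}^{(1)}=\{\mathbf 0,u,v,\mathbf 1\}$, that intersection has size $0$, $2^{m-1}$ or $2^m$, each divisible by $4$ because $m$ is odd with $m\geq 3$; hence $\phi(x)\cdot\phi(y)=0$ and $\phi(\widetilde{C_f})$ is self-dual. (For $m=3$ one may alternatively invoke Example~\ref{ex2}: $\widetilde{C_f}$ is equivalent to {\rsfs K}$_8$$'$, whose Gray image is a known self-dual binary code — recall that negating a $\mathbb{Z}_4$-coordinate transposes the two associated binary coordinates, so equivalent $\mathbb{Z}_4$-codes have permutation-equivalent Gray images.) For (ii), $\phi(C_f)\subseteq\phi(\widetilde{C_f})$ and the latter is self-orthogonal, so the linear code $\phi(C_f)$ is self-orthogonal; and for $m\geq 5$ the doubly even code $\phi(\widetilde{C_f})$ has the doubly even subcode $\phi(C_f)$.

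The main obstacle is the self-duality assertion of (i) for $m=3$: there $\widetilde{C_f}$ is only of Type II — its Lee weights $6$ and $10$ appearing in Example~\ref{ex2} are not divisible by $4$ — so neither Theorem~\ref{Lee4} nor Theorem~\ref{LeeIV} applies, and one must either quote the known structure of {\rsfs K}$_8$$'$ or exploit, as above, that the residue code $\widetilde{C_f}^{(1)}$ is small enough that any two of its codewords meet in a multiple of $4$ coordinates, which is precisely what forces the Gray image to be self-orthogonal.
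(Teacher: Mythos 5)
Your proof is correct, and on the main points it follows the same route as the paper: for $m\geq 5$ you invoke Theorem~\ref{prosC_f}(ii) and Theorem~\ref{LeeIV}, and for the linearity of $\phi(C_f)$ you verify the criterion of Lemma~\ref{lem1} by observing that the two non-even rows of a standard-form generator matrix reduce modulo $2$ to the two disjoint-support weight-$2^{m-1}$ codewords of the residue code, so that $2g_1g_2=\mathbf 0$ — exactly the paper's argument for part (ii). Where you genuinely diverge is the case $m=3$ of part (i): the paper disposes of it by citing Example~\ref{ex2}, i.e.\ the known classification fact that $\widetilde{C_f}$ is then equivalent to Klemm's code whose Gray image is an even self-dual $[16,8,4]$ code, whereas you give a uniform, self-contained argument valid for all odd $m\geq 3$ — linearity of $\phi(\widetilde{C_f})$ via Lemma~\ref{lem1} (including the diagonal cases $2g_ig_i=2g_i\in\widetilde{C_f}$, which the paper leaves tacit), and self-orthogonality via the identity $\phi(x)\cdot\phi(y)\equiv\tfrac12\,|\mathrm{supp}(\overline{x})\cap\mathrm{supp}(\overline{y})|\pmod 2$ for $\langle x,y\rangle\equiv 0\pmod 4$, combined with the fact that the residue code $\{\mathbf 0,u,v,\mathbf 1\}$ has all pairwise support intersections of size $0$, $2^{m-1}$ or $2^m$, each divisible by $4$. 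That identity is correct (it follows from $\langle x,y\rangle\equiv 2S+T\pmod 4$ and $\phi(x)\cdot\phi(y)\equiv S+T\pmod 2$ in the notation of the usual Gray-map decomposition), and your dimension count $|\widetilde{C_f}|=2^{2^m}$ upgrades self-orthogonality to self-duality. The trade-off: the paper's proof is shorter but leans on external classification results for $m=3$ (and in particular does not itself verify that the Gray image is linear there, which Theorem~\ref{Lee4} cannot supply since Lee weights $6$ and $10$ occur); your version is longer but proves everything from the structure of the residue code and would not need Example~\ref{ex2} at all.
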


\begin{proof}
\begin{itemize}
\item[(i)] According to Theorem \ref{prosC_f}, $\widetilde{C_f}$ is a Type II $\mathbb{Z}_4$-code of length $2^m$ and for $m\geq 5,$ $\widetilde{C_f}$ is a Type IV $\mathbb{Z}_4$-code.
Moreover, for $m=3$, a construction yields the code {\rsfs K}$_8$$'$ (see Examples \ref{ex1} and \ref{ex2}). 
Its Gray image is an even  $[16,8,4]$ binary code. Together with Theorem \ref{LeeIV} that concludes the proof.

\item[(ii)] The code  $\phi({C_f})$ is a subcode of the code $\phi(\widetilde{C_f})$. 
Let $G$ be a generator matrix of $C_f$ in standard form.
Codewords $g_1$ and $g_2$ have alternating odd and even coordinates. So, $g_1g_2$ is an even codeword. Then $2g_1g_2$ is a codeword with all coordinates equal to 0. 
It follows from Lemma \ref{lem1} that $\phi(C_f)$ is linear. Therefore, the statement holds.

\end{itemize}

\end{proof}

\begin{cor}
 Let $\widetilde{C_f}$ be a Type IV-II $\mathbb{Z}_4$-code of length $2^{m}$ for odd $m, m\geq 5,$ constructed as in Theorem \ref{prosC_f},  and let $(A_0',\dots, A_{2^m}')$ be the weight distribution of its torsion code  $\widetilde{C_f}^{(2)}$.
 Then the Gray image $\phi(\widetilde{C_f})$ has weight distribution $(W_0,\dots,W_{2^{m+1}})$ with $W_i=0$ for $i \not\equiv 0\ (mod\ 4)$ 
 %is not divisible by $4$
and, for $i$ divisible by $4,$ it holds $$W_i=A_{\frac{i}{2}}'+s_i+u_i,$$ where
\begin{align*}
 s_i &=
  \begin{cases}
   2^{2^{m}-2},      & \text{if } i=2^m \\
   0,       & \text{otherwise}
  \end{cases},
  \\
 u_i &=
  \begin{cases}
   2^{2^{m-1}}{2^{m-1}\choose (2i-2^m)/4}, & \text{if } 2^{m-1}\leq i\leq 3\cdot2^{m-1} \\
   0,        & \text{otherwise}
  \end{cases}.
\end{align*}
 \end{cor}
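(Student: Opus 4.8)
The plan is to deduce the statement directly from Theorem \ref{weights}(iii) by transporting weights across the Gray map, so essentially no new computation is needed. Recall from the definition of $\phi$ that $\phi:\mathbb{Z}_4^{2^m}\to\mathbb{F}_2^{2^{m+1}}$ is a bijection and that, for every $x\in\mathbb{Z}_4^{2^m}$, the Lee weight $wt_L(x)$ equals the (Hamming) weight of its Gray image $\phi(x)$. Hence $\phi$ restricts to a weight-preserving bijection from $\widetilde{C_f}$ onto $\phi(\widetilde{C_f})$ carrying Lee weight to Hamming weight, and therefore the Hamming weight distribution $(W_0,\dots,W_{2^{m+1}})$ of $\phi(\widetilde{C_f})$ coincides, term by term, with the Lee weight distribution $(W_0^L,\dots,W_{2^{m+1}}^L)$ of $\widetilde{C_f}$.

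Next I would observe that the hypotheses are exactly those of Theorem \ref{weights}(iii): $\widetilde{C_f}$ is obtained from bent functions $a,b$ via Theorem \ref{prosC_f}, has length $2^m$ with $m$ odd, and $m\geq5$, so by Theorem \ref{prosC_f}(ii) it is a Type IV $\mathbb{Z}_4$-code and all its Lee weights are divisible by $4$. Theorem \ref{weights}(iii) then gives $W_i^L=0$ for $i\not\equiv0\pmod 4$ and $W_i^L=A'_{i/2}+s_i+u_i$ for $4\mid i$, with $A'_j$, $s_i$, $u_i$ as displayed there (note that the $t_i$ appearing in part (i) is replaced by $u_i$ in part (iii)). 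Substituting these equalities into the identification of the previous paragraph yields precisely the claimed formulas for $W_i$; the only thing to verify is the routine bookkeeping that $2\cdot 2^m=2^{m+1}$ and that the index ranges defining $s_i$ and $u_i$ are copied verbatim from Theorem \ref{weights}.

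Since this is a straightforward corollary, I do not expect a genuine obstacle; the one point worth stressing is why the hypothesis $m\geq5$ is essential. For $m=3$ the code $\widetilde{C_f}$ is equivalent to the code in Example \ref{ex2}, whose Lee weight distribution contains the weights $6$ and $10$, so its Gray image is not doubly even and a formula forcing $4\mid i$ cannot hold; the argument really uses that all Lee weights of $\widetilde{C_f}$ are divisible by $4$, which Theorem \ref{prosC_f}(ii) guarantees only for $m\geq5$. As a consistency check one may add, invoking Corollary \ref{fiC_f}(i), that $\phi(\widetilde{C_f})$ is a self-dual doubly even binary code of length $2^{m+1}$, in agreement with $W_i$ vanishing unless $4\mid i$.
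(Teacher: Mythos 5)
Your proposal is correct and follows exactly the paper's route: the paper's proof is the one-line observation that the claim follows directly from Theorem \ref{weights}(iii), and your argument simply spells out the underlying fact that the Gray map is a bijection carrying Lee weight to Hamming weight. The additional remarks on the necessity of $m\geq 5$ and the consistency check via Corollary \ref{fiC_f}(i) are sound but not needed.
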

 
 \begin{proof}
 It follows directly from Theorem \ref{weights} (iii).
 \end{proof}

\section{Examples and related 1-designs} \label{struc}

In previous sections we constructed codes $C_f$, $\widetilde{C_f}$ and  $\phi(\widetilde{C_f})$ for $m=3$. In this section we give an example for $m=5$.
We also observe the minimum weight codewords and their relation with combinatorial designs.  
An incidence structure ${\mathcal D} =( {\mathcal P},{\mathcal B},{\mathcal I})$, with point set ${\mathcal P}$,
block set ${\mathcal B}$ and incidence ${\mathcal I}$ is a {\it $t$-$(v,k,\lambda)$ design}, if $|{\mathcal P}|=v$, every block
$B \in {\mathcal B}$ is incident with precisely $k$ points, and every $t$ distinct points are together incident with precisely $\lambda$ blocks. We assume that the reader is familiar with the basic facts of  design theory (see, e.g., \cite{Beth}, \cite{CRC}).\\

\underline{$m=3$}

In Examples \ref{ex1} and \ref{ex2} we constructed self-orthogonal $\mathbb{Z}_4$-codes $C_{f_{2^3\_1}}$, $C_{f_{2^3\_2}}$ of type $4^22^3$ and the corresponding Type II $\mathbb{Z}_4$-code $\widetilde{C_{f_{2^3}}}$ equivalent to {\rsfs K}$_8$$'$. The Gray image of $\widetilde{C_{f_{2^3}}}$ is a self-dual $[16,8,4]$ binary code. Here we observe the minimum weight codewords of the code $C_{f_{2^3\_1}}$ and related codes.

It holds $d_H(C_{f_{2^3\_1}})=2, d_L(C_{f_{2^3\_1}})=4$ and $d_E(C_{f_{2^3\_1}})=8,$ and the minimum Hamming weight codewords are the same as the  minimum Lee weight codewords. The supports of these codewords form a $1$-$(8,2,1)$ design, i.e., a resolvable $1$-design with $4$ blocks and the block intersection number $0.$ So, the minimum weight codewords of its Gray image form a resolvable $1$-$(16,4,1)$ design. 
The codewords of minimum Euclidean weight have Lee weight equal to $4, 6$ or $8.$ The supports of those codewords with Lee weight equal to $6$ form a $1$-$(8,5,5)$ design with $8$ blocks and block intersection numbers $2$ and $4.$  The supports of the codewords with Euclidean weight and Lee weight equal to $8$ form a trivial $1$-$(8,8,1)$ design with one block.

The dual code $C_{f_{2^3\_1}}^\bot$ is of type $4^{3}2^{3}$ with $d_H(C_{f_{2^3\_1}}^\bot)=2, d_L(C_{f_{2^3\_1}}^\bot)=d_E(C_{f_{2^3\_1}}^\bot)=4.$ 
In the code $C_{f_{2^3\_1}}^\bot$, the codewords of minimum Lee weight split in two classes: one class contains the codewords of minimum Hamming weight and the other contains the codewords of minimum Euclidean weight. 
The supports of the minimum Hamming weight codewords form a $1$-$(8,2,3)$ design with $12$ blocks and block intersection numbers $0$ and $1.$ 
%The blocks of the constructed design give a $7$-regular graph.
The supports of the codewords of minimum Euclidean weight form a $1$-$(8,4,2)$ design with $4$ blocks and block intersection numbers $0$ and $2.$
That design is a $(2,2;2)$-net, i.e., an affine resolvable $1$-design.
%The blocks of the constructed design give a $1$-regular graph.

It follows from Lemma \ref{torsion} that the torsion code $\widetilde{C_f}^{(2)}$ of  $\widetilde{C_f}$ has minimum weight $2.$ 
In the sequel, we observe the minimum weight codewords in the code $\widetilde{C_{f_{2^3}}}^{(2)}.$
In that code, the supports $\{i,j\},\ i<j,$ of the minimum weight codewords are divided in two classes:
$$\mathcal{M}(2)=\{\{i,j\}\ :\ j-i=2\ \text{or}\ j-i=6\}$$ and 
$$\mathcal{M}(4)=\{\{i,j\}\ :\ j-i=4\}.$$
The class $\mathcal{M}(2)$ consists of $8$ supports and every coordinate position occurs in exactly two supports in the class.
The class $\mathcal{M}(4)$ consists of $4$ supports and every coordinate position occurs in exactly one of the supports.
So, the set $\mathcal{M}(2)\cup \mathcal{M}(4)$ is the set of the blocks of a $1$-$(8,2,3)$ design with $12$ blocks.\\
The results for $m=3$ are summarized in Table \ref{m_3}.

\begin{table}[h]
	\centering  
		\begin{tabular}
	{|c|c|c|c|c|}	
\hline			
code&type/&$d_H,d_L,d_E$ &parameters,& block intersection \\
&parameters& &no. of blocks& numbers \\
\hline
$C_{f_{2^3\_1}}$&$4^22^3$&2,4,8&$1$-$(8,2,1)$, $4$ blocks&$0$    \\
            & & &$1$-$(8,5,5)^*$, $8$ blocks& $2,4$ \\
						&      &     &$1$-$(8,8,1)$, one block&    \\
\hline
	$\phi(C_{f_{2^3\_1}})$& $[16,7,4]$ & & $1$-$(16,4,1)$, $4$ blocks& $0$\\ 
\hline
$C_{f_{2^3\_1}}^\bot$&$4^32^3$&2,4,4&$1$-$(8,2,3)$, $12$ blocks& $0,1$\\
                 & & &$1$-$(8,4,2)$, $4$ blocks&$0,2$\\
\hline	
$\widetilde{C_{f_{2^3}}}^{(2)}$& $[8,6,2]$ & &$1$-$(8,2,3)$, $12$ blocks& $0,1$\\ 
\hline								
									
\end{tabular}
\caption{\it Constructed designs for $m=3$} \label{m_3}
\end{table}

\begin{remark}
 The block intersection graph $G_2$  of a $1$-$(8,5,5)$ design marked with $*$ in Table \ref{m_3} is a strongly regular graph with parameters $(8,4,0,4)$, i.e. the complete bipartite graph $K_{4,4}$.\\
\end{remark}

\underline{$m=5$}

Let $(a,b)=(x_1x_2+x_1x_3+x_2x_4,x_1x_2+x_3x_4)$ be a pair of bent functions $a, b: \mathbb{F}_2^4\rightarrow \mathbb{F}_2.$
From that pair, we constructed gbent function $f_{2^5}$ as described in Theorem \ref{bentGbent}. Further, the codeword $$c_{f_{2^5}}=(0, 1, 0, 1, 0, 1, 0, 3, 0, 1, 2, 1, 0, 1, 2, 3, 0, 1, 0, 1, 2, 1, 2, 3, 2, 3, 0, 3, 0, 3, 2, 1)$$
is constructed as in Lemma \ref{cf}.
The self-orthogonal $\mathbb{Z}_4$-code  $C_{f_{2^5}}$, constructed by Theorem \ref{C_f}, is of type $4^22^{21},$ its dual $C_{f_{2^5}}^\bot$ is of type $4^{9}2^{21}$ and its Gray image is a doubly even $[64,25,4]$ binary code. The permutation automorphism group of $C_{f_{2^5}}$ is of order $9663676416.$

$\mathbb{Z}_4$-code $C_{f_{2^5}}$ has $d_H(C_{f_{2^5}})=2, d_L(C_{f_{2^5}})=4$ and $d_E(C_{f_{2^5}})=8$, and the sets of minimum weight codewords are the same for all three weights. The supports of those codewords form a resolvable $1$-$(32,2,1)$ design with $16$ blocks and block intersection number $0.$ So, the minimum weight codewords of its Gray image yield a resolvable $1$-$(64,4,1)$ design. 

For $C_{f_{2^5}}^\bot$, it holds $d_H(C_{f_{2^5}}^\bot)=2, d_L(C_{f_{2^5}}^\bot)=4$ and $d_E(C_{f_{2^5}}^\bot)=8.$  The sets of minimum weight codewords are the same for Hamming and Lee weight. The supports of those codewords form a $1$-$(32,2,15)$ design with $240$ blocks and block intersection numbers $0$ and $1.$ 
The codewords of minimum Euclidean weight have Lee weight equal to $4, 6$ or $8.$ The supports of those codewords with Lee weight equal to $6$ form a $1$-$(32,5,20)$ design with $128$ blocks and block intersection numbers $0,1,2$ and $4.$ 
The supports of the codewords with Euclidean weight and Lee weight equal to $8$ form a  $1$-$(32,8,7)$ design with $28$ blocks and block intersection numbers $0$ and $4.$\\
Further, similarly to what we observed in the case $m=3$, the supports of minimum weight codewords in the torsion code $\widetilde{C_{f_{2^5}}}^{(2)}$ form a $1$-$(32,2,15)$ with $240$ blocks, as presented in Table \ref{m_5}.

\begin{table}[h]
	\centering  
		\begin{tabular}
	{|c|c|c|c|c|}	
\hline			
code&type/&$d_H,d_L,d_E$ &parameters,& block intersection \\
&parameters& &no. of blocks& numbers \\
\hline
$C_{f_{2^5}}$&$4^22^{21}$&2,4,8 &$1$-$(32,2,1)$, $16$ blocks&$0$\\
           
\hline
	$\phi(C_{f_{2^5}})$&$[64,25,4]$ & & $1$-$(64,4,1)$, $16$ blocks& $0$\\
\hline
$C_{f_{2^5}}^\bot$&$4^{9}2^{21}$&2,4,8 &$1$-$(32,2,15)$, $240$ blocks& $0,1$\\
                  & & &$1$-$(32,5,20)$, $128$ blocks& $0,1,2,4$  \\
						     &   &        &$1$-$(32,8,7)^*$, $28$ blocks& $0,4$    \\
\hline	
$\widetilde{C_{f_{2^5}}}^{(2)}$& $[32,30,2]$ & &$1$-$(32,2,15)$, $240$ blocks& $0,1$\\ 
\hline								
									
\end{tabular}
\caption{\it Constructed designs for $m=5$} \label{m_5}
\end{table}

\begin{remark}
 A $1$-$(32,8,7)$ design marked with $*$ in Table \ref{m_5} is a $(4,7;2)$-net, i.e., an affine resolvable $1$-design.
 Its block intersection graph $G_0$ is a strongly regular graph with parameters $(28,15,6,10)$.
 \end{remark}

Finally, observing the minimum weight codewords in the torsion code $\widetilde{C_f}^{(2)}$, for odd $m, m\geq 3,$  we obtain the following statement.
 
\begin{thm}
Let $\widetilde{C_f}$ be a Type II $\mathbb{Z}_4$-code of length $2^{m}$ for odd $m, m\geq 3,$ constructed as in Theorem \ref{prosC_f}. 
Then the supports of minimum weight codewords in its torsion code $\widetilde{C_f}^{(2)}$ form the set of the blocks of a $1$-$(2^m,2,2^{m-1}-1)$ design with $2^{m-1}(2^{m-1}-1)$ blocks.
\end{thm}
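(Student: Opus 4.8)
The plan is to understand precisely which weight-$2$ vectors lie in the torsion code $\widetilde{C_f}^{(2)}$, and then count them. By Lemma \ref{torsion}, the torsion code $\widetilde{C_f}^{(2)}$ is the dual of the residue code $\widetilde{C_f}^{(1)}$, which by the construction in Theorem \ref{prosC_f} has generator matrix of the shape $[\,F \mid \tilde{I_2}\,]$ with the three non-zero weights $0, 2^{m-1}, 2^m$ occurring with multiplicities $1, 2, 1$. So $\widetilde{C_f}^{(1)}$ is a binary $[2^m,2]$ code whose three non-zero codewords are the all-ones vector $\1$ and two complementary vectors $v$, $v+\1$ each of weight $2^{m-1}$. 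Concretely, by Theorem \ref{C_f} the residue code is spanned by the reductions mod $2$ of $c_f$ and a shift; since $f(x,0)=2a(x)$ and $f(x,1)=2b(x)+1$, the codeword $c_f$ reduces mod $2$ to the alternating vector $(0,1,0,1,\dots,0,1)$, and a single cyclic shift gives $(1,0,1,0,\dots,1,0)$. Hence $v$ is the "odd-positions" indicator and $v+\1$ the "even-positions" indicator (with respect to the standard ordering on $\mathbb{F}_2^m$). A weight-$2$ vector $e_{ij}=e_i+e_j$ is orthogonal to all of $\1, v, v+\1$ iff it is orthogonal to $\1$ (automatic, since $2$ is even) and to $v$; orthogonality to $v$ says that $i$ and $j$ lie in the same parity class. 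So the supports of weight-$2$ codewords are exactly the pairs $\{i,j\}$ with $i,j$ in the same parity class.

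Given this description, the counting is immediate. Each of the two parity classes has $2^{m-1}$ coordinates, so the number of admissible pairs is $2\binom{2^{m-1}}{2} = 2^{m-1}(2^{m-1}-1)$, which matches the claimed number of blocks. To see that these pairs form a $1$-design on $2^m$ points with block size $2$ and replication number $2^{m-1}-1$: a fixed coordinate $i$ lies in one parity class of size $2^{m-1}$, hence is paired with each of the other $2^{m-1}-1$ coordinates of that class and with none outside it, so it occurs in exactly $2^{m-1}-1$ blocks. Thus every point has the same replication number $\lambda = 2^{m-1}-1$, and we indeed obtain a $1$-$(2^m,2,2^{m-1}-1)$ design with $2^{m-1}(2^{m-1}-1)$ blocks. (A consistency check: $b\cdot k = 2^{m-1}(2^{m-1}-1)\cdot 2 = v\cdot \lambda = 2^m(2^{m-1}-1)$, as required.)

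The main point requiring care — the only genuine obstacle — is pinning down the parity structure of the residue code, i.e. verifying that the two weight-$2^{m-1}$ codewords of $\widetilde{C_f}^{(1)}$ are precisely the indicators of the two parity classes of coordinates, and that these classes each have size $2^{m-1}$. This follows from the identities $f(x,0)=2a(x)$, $f(x,1)=2b(x)+1$ already recorded in the proof of Theorem \ref{C_f} (even and odd coordinates alternate in $c_f$), together with the fact that $C_f$, and hence $\widetilde{C_f}$, is cyclic so that a single shift of $c_f \bmod 2$ generates the second basis vector of the residue code. Once this is in place, everything else is the elementary orthogonality computation and block count sketched above. I would also remark that this recovers the $1$-$(8,2,3)$ design with $12$ blocks for $m=3$ and the $1$-$(32,2,15)$ design with $240$ blocks for $m=5$ reported in Tables \ref{m_3} and \ref{m_5}, and that the partition of the block set into the two parity classes refines the design into a resolvable-type structure, consistent with the classes $\mathcal{M}(2)$ and $\mathcal{M}(4)$ observed there.
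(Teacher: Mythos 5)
Your proposal is correct, and it rests on the same two facts as the paper's proof---that the weight-$2$ codewords of $\widetilde{C_f}^{(2)}$ are exactly the vectors $e_i+e_j$ with $i$ and $j$ in the same parity class, and that counting these yields the stated design---but it organizes the argument differently. The paper never writes down the residue code explicitly: it observes only that a support $\{i,j\}$ must have $j-i$ even (orthogonality to $\widetilde{C_f}^{(1)}$), partitions the supports into the cyclic difference classes $\mathcal{M}(k)$, and invokes cyclicity of $\widetilde{C_f}^{(2)}$ to conclude that each class is full, so that the block count is $(2^{m-2}-1)\cdot 2^m+2^{m-1}=2^{m-1}(2^{m-1}-1)$ and the replication number is $2(2^{m-2}-1)+1=2^{m-1}-1$. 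You instead pin down $\widetilde{C_f}^{(1)}$ as the $[2^m,2]$ code spanned by the all-ones vector and the alternating vector, deduce by a direct orthogonality computation that $\{i,j\}$ is a support if and only if $i$ and $j$ have the same parity (so both implications are explicit, whereas the paper states only necessity and lets cyclicity supply the converse), and count $2\binom{2^{m-1}}{2}$ blocks with each point lying in $2^{m-1}-1$ of them. Your route is slightly more self-contained; the paper's difference-class decomposition has the side benefit of exhibiting the resolution of the design into the classes $\mathcal{M}(k)$ that reappear in the examples of Section \ref{struc}. The only point you leave implicit is that the minimum weight of $\widetilde{C_f}^{(2)}$ is indeed $2$ (no weight-$1$ vector is orthogonal to the all-ones codeword of $\widetilde{C_f}^{(1)}$), which is immediate and is equally taken for granted in the paper.
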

  
\begin{proof}
Let $\{i,j\},\ i<j,$ be a support of a minimum weight codeword in $\widetilde{C_f}^{(2)}.$ Then $j-i$ has to be an even number because $\widetilde{C_f}^{(2)}$ is the dual code of $\widetilde{C_f}^{(1)}.$ So,
the supports $\{i,j\}\subseteq \{1,...,2^m\},\ i<j,$ of the minimum weight codewords in $\widetilde{C_f}^{(2)}$ are divided into $\frac{2^m}{4}=2^{m-2}$ classes:
$$\mathcal{M}(k)=\{\{i,j\}\ :\ j-i=k\ \text{or}\ j-i=2^m-k\},$$ 
where $k\in\{2,4,\dots,2^{m-1}-2\},$ 
and 
$$\mathcal{M}\left(2^{m-1}\right)=\left\{\{i,j\}\ :\ j-i=2^{m-1}\right\}.$$
$\widetilde{C_f}^{(2)}$ is cyclic. So,
the class $\mathcal{M}(k),\ k\in\{2,4,\dots,2^{m-1}-2\},$ consists of $2^m$ supports and every coordinate position occurs in exactly two supports in the class.
The class $\mathcal{M}(2^{m-1})$ consists of $2^{m-1}$ supports and every coordinate position occurs in exactly one of the supports.
So, the set 
$$\mathcal{M}(2)\cup\dots \cup\mathcal{M}\left(2^{m-1}-2\right)\cup \mathcal{M}\left(2^{m-1}\right)$$
 is the set of the blocks of a $1$-$(2^m,2,2^{m-1}-1)$ design with $2^{m-1}(2^{m-1}-1)$ blocks.
\end{proof}

\begin{center}{\bf Acknowledgement}\end{center}
This work has been supported by {\rm C}roatian Science Foundation under the project 6732 and by the University of Rijeka under the project uniri-prirod-18-45.


\begin{thebibliography}{30}
\bibitem{Beth}  {\sc T.\,Beth, D.\,Jungnickel, H.\,Lenz}, {\em Design Theory} , 2nd ed., Cambridge University Press, Cambridge, 1999.
 
\bibitem{magma} {\sc W.\,Bosma, J.\,Cannon}, {\em Handbook of Magma Functions}, Department of Mathematics, University of Sydney, 1994, available at {\tt http://magma.maths.usyd.edu.au/magma}
\bibitem{latt} {\sc A. R. Calderbank, N. J. A. Sloane}, {\em Double circulant codes over $\mathbb{Z}_4$ and even unimodular lattices}, J. Algebr. Comb. {\bf6} (1997), 119--131.

\bibitem{hyper} {\sc C. Carlet, P. Gaborit}, {\em Hyper-bent functions and cyclic codes}, J. Comb. Theory Ser. A {\bf113}(3) (2006), 466--482. 

\bibitem{4dec} {\sc C. Carlet, S. Mesnager}, {\em Four decades of research on bent functions}, Des. Codes Cryptogr. {\bf 78} (2016), 5--50.
 
\bibitem{CRC} {\sc C. J. Colbourn, J. H. Dinitz (eds.)}, {\em Handbook of Combinatorial Designs}, 2nd ed., Chapman \& Hall/CRC Press, Boca Raton, 2007.

\bibitem{ConS} {\sc J. H. Conway, N. J. A. Sloane}, {\em Self-Dual Codes over the Integers Modulo 4}, J. Comb. Theory Ser. A {\bf62} (1993), 30--45. 

\bibitem{vectorial} {\sc C. Ding, A. Munemasa, V. D. Tonchev}, {\em Bent Vectorial Functions, Codes and Designs}, IEEE Trans. Inform. Theory {\bf 65}(11) (2019), 7533--7541.
 

\bibitem{typeIV} {\sc S. T. Dougherty, P. Gaborit, M. Harada, A. Munemasa, P. Sol\'e} {\em Type IV self-dual codes over rings}, IEEE Trans. Inform. Theory {\bf 45}(7) (1999) 2345--2360.


\bibitem{gray9} {\sc C.\,Fern\'{a}ndez-C\'{o}rdoba, J.\,Pujol, M.\,Villanueva}, {\em  $\mathbb{Z}_2\mathbb{Z}_4$-linear codes: rank and kernel}, Des. Codes Cryptogr. {\bf 56} (2010), 43--59.

\bibitem{Gaborit} {\sc P. Gaborit}, {\em Mass formulas for self-dual codes over $\mathbb{Z}_4$ and $\mathbb{F}+u\mathbb{F}_q$ rings}, IEEE Trans. Inform. Theory {\bf 42}(4) (1996), 1222--1228.


\bibitem{survey} {\sc M.\,Harada, P.\,Sol\'{e}, P.\,Gaborit}, {\em Self-dual codes over $\mathbb{Z}_4$ and unimodular lattices: a survey}, in: {\it Algebra and Combinatorics: an International Congress, ICAC'97, Hong Kong},
(K.-P.\,Shum, E.\,J.\,Taft and Z.-X.\,Wan, Eds.), Springer, Singapore, 1999, 255--275.
 

\bibitem{clasZ4} {\sc W.\,C.\,Huffman}, {\em On the classification and enumeration of self-dual codes}, Finite Fields Appl. {\bf11} (2005), 451--490.

\bibitem{FEC} {\sc W.\,C.\,Huffman, V.\,Pless}, {\em Fundamentals of Error-Correcting Codes}, Cambridge University Press, Cambridge, 2003.

\bibitem{Klemm} {\sc M. Klemm}, {\em Selbstduale Codes \"{u}ber dem Ring der ganzen Zahlen modulo 4}, Arch. Math. {\bf53} (1989), 201--207.

\bibitem{Kumar} {\sc P. V. Kumar, R. A. Scholtz, L. R. Welch}, {\em Generalized bent functions and their properties},  J. Combin. Theory Ser. A {\bf 40} (1985), 90--107.

\bibitem{PlessLeonFields} {\sc V. Pless, J. Leon, J. Fields}, All $\mathbb{Z}_4$ codes of Type II and length 16 are known, J. Combin. Theory Ser. A, {\bf 78} (1997), 32--50.

\bibitem{Roth}  {\sc O.S. Rothaus}, {\em On "Bent" Functions},  J. Comb. Theory Ser. A {\bf 20} (1976), 300--305.

\bibitem{hal} {\sc M. Shi, Y. Liu, H. Randriambololona, L. Sok, P. Sole}, {\em Trace codes over $\mathbb{Z}_4$, and
Boolean functions}, Des. Codes Cryptogr., {\bf87} (2019), 1447--1455.

\bibitem{Sch} {\sc  K.U. Schmidt}, {\em Quaternary constant-amplitude codes for multicode CDMA}, 
IEEE Trans. Inform. Theory {\bf  55} (2009), 1824--1832.

\bibitem {weakly} {\sc C. Tang, N. Li, Y. Qi, Z. Zhou, T. Helleseth}, {\em Linear Codes With Two or Three Weights From Weakly Regular Bent Functions}, 
IEEE Trans. Inform. Theory {\bf 62}(3) (2016), 1166--1176.
 
\end{thebibliography}
\end{document}